\theoremstyle{plain}                   
\newtheorem{thm}{Theorem}[section]
\newtheorem{lmm}[thm]{Lemma}
\newtheorem{prop}[thm]{Proposition}
\newtheorem{cor}[thm]{Corollary}
\theoremstyle{definition}              
\newtheorem{opm}{Question}
\newtheorem{conj}[thm]{Conjecture}
\newtheorem{defn}[thm]{Definition}
\newtheorem{rmk}[thm]{Remark}
\newcommand{\bi}{\begin{itemize}}
\newcommand{\ei}{\end{itemize}}
\newcommand{\bp}{\begin{proof}}
\newcommand{\ep}{\end{proof}}
\newcommand{\bcor}{\begin{cor}}
\newcommand{\ecor}{\end{cor}}
\newcommand{\bthm}{\begin{thm}}
\newcommand{\ethm}{\end{thm}}
\newcommand{\blmm}{\begin{lmm}}
\newcommand{\elmm}{\end{lmm}}
\newcommand{\bdefn}{\begin{defn}}
\newcommand{\edefn}{\end{defn}}
\newcommand{\bprop}{\begin{prop}}
\newcommand{\eprop}{\end{prop}}
\newcommand{\bconj}{\begin{conj}}
\newcommand{\econj}{\end{conj}}
\newcommand{\bopm}{\begin{opm}}
\newcommand{\eopm}{\end{opm}}
\newcommand{\brmk}{\begin{rmk}}
\newcommand{\ermk}{\end{rmk}}
\newcommand{\defeq}{\stackrel{\text{def}}{=}}
\newcommand{\eat}[1]{}
\newcommand{\magda}[1]{\textcolor{blue}{[[Magda: #1]]}}
\newcommand{\dan}[1]{\textcolor{red}{[[Dan: #1]]}}
  \providecommand\BibTeX{{%
    \normalfont B\kern-0.5em{\scshape i\kern-0.25em b}\kern-0.8em\TeX}}}
\begin{document}

\title{SafeBound: A Practical System for Generating Cardinality Bounds}

\author{Kyle Deeds}
\affiliation{%
  \institution{University of Washington}
  \city{Seattle}
  \state{Washington}
  \country{United States}
}
\email{kdeeds@cs.washington.edu}

\author{Dan Suciu}
\affiliation{%
  \institution{University of Washington}
  \city{Seattle}
  \state{Washington}
  \country{United States}
}
\email{suciu@cs.washington.edu}

\author{Magda Balazinska}
\affiliation{%
  \institution{University of Washington}
  \city{Seattle}
  \state{Washington}
  \country{United States}
}
\email{magda@cs.washington.edu}

\renewcommand{\shortauthors}{Deeds, Suciu, Balazinska}
\begin{CCSXML}
<ccs2012>
   <concept>
       <concept_id>10002951.10002952.10003190.10003192.10003210</concept_id>
       <concept_desc>Information systems~Query optimization</concept_desc>
       <concept_significance>500</concept_significance>
       </concept>
   <concept>
       <concept_id>10002951.10002952.10003190.10003192.10003425</concept_id>
       <concept_desc>Information systems~Query planning</concept_desc>
       <concept_significance>500</concept_significance>
       </concept>
   <concept>
       <concept_id>10003752.10010070.10010111.10011711</concept_id>
       <concept_desc>Theory of computation~Database query processing and optimization (theory)</concept_desc>
       <concept_significance>500</concept_significance>
       </concept>
 </ccs2012>
\end{CCSXML}

\ccsdesc[500]{Information systems~Query optimization}
\ccsdesc[500]{Information systems~Query planning}
\ccsdesc[500]{Theory of computation~Database query processing and optimization (theory)}
\keywords{Cardinality Bounding, Cardinality Estimation, Query Optimization, Degree Sequences}



\begin{abstract}
Recent work has reemphasized the importance of {\em cardinality estimates} for query optimization.  While new techniques have continuously improved in accuracy over time, they still generally allow for under-estimates which often lead optimizers to make overly optimistic decisions. This can be very costly for expensive queries.  An alternative approach to estimation is {\em cardinality bounding}, also called pessimistic cardinality estimation, where the cardinality estimator provides guaranteed upper bounds of the true cardinality. By never underestimating, this approach allows the optimizer to avoid potentially inefficient plans. However, existing pessimistic cardinality estimators are not yet practical: they use very limited statistics on the data, and cannot handle predicates.  In this paper, we introduce SafeBound, the first practical system for generating cardinality bounds.  SafeBound builds on a recent theoretical work that uses degree sequences on join attributes to compute cardinality bounds, extends this framework with predicates, introduces a practical compression method for the degree sequences, and implements an efficient inference algorithm.  Across four workloads, SafeBound achieves up to 80\% lower end-to-end runtimes than PostgreSQL, and is on par or better than state of the art ML-based estimators and pessimistic cardinality estimators, by improving the runtime of the expensive queries.  It also saves up to 500x in query planning time, and uses up to 6.8x less space compared to state of the art cardinality estimation methods.
\end{abstract}

\maketitle

\section{Introduction}
\label{sec:intro}

When considering how to execute a database query, the
query optimizer relies on cardinality
estimates to determine costs of potential plans and choose an efficient one. Recent benchmarks
of query optimizers have shown that (1) traditional cardinality
estimators routinely underestimate the true cardinality by many orders
of magnitude~\cite{DBLP:journals/vldb/LeisRGMBKN18} (2)
underestimating cardinalities can result in highly inefficient query
plans~\cite{DBLP:journals/vldb/LeisRGMBKN18,DBLP:journals/pvldb/HanWWZYTZCQPQZL21,DBLP:conf/sigmod/KimJSHCC22}
and (3) accurate estimates of large (sub) queries are crucial to
generating good query
plans~\cite{DBLP:journals/pvldb/HanWWZYTZCQPQZL21}.

These observations have motivated work on \textit{cardinality
  bounding}, also called pessimistic cardinality estimation, where the
cardinality estimator provides guaranteed upper bounds of the true
cardinality rather than unbiased
estimates~\cite{DBLP:conf/sigmod/CaiBS19,DBLP:conf/cidr/HertzschuchHHL21}. When
given a cardinality bound rather than a cardinality estimate and
assuming a reasonably accurate cost model, the query optimizer will
avoid the most inefficient query plans, e.g. choosing a
nested loop join between two large tables.  Formulas that bound the
output cardinality of a query have been introduced in the theoretical
community~\cite{DBLP:conf/soda/GroheM06,DBLP:journals/jacm/GottlobLVV12,DBLP:conf/pods/KhamisNS16,DBLP:conf/pods/Khamis0S17,DBLP:journals/corr/abs-2201-04166},
and implementations based on simplified versions of these formulas are
described in~\cite{DBLP:conf/sigmod/CaiBS19} and
in~\cite{DBLP:conf/cidr/HertzschuchHHL21}.  One comprehensive study of
cardinality estimation methods reports good end-to-end performance by cardinality bounds compared to other
methods~\cite{DBLP:journals/pvldb/HanWWZYTZCQPQZL21}. Recent work has even found novel connections between these cardinality bounds and more traditional estimators, allowing cardinality bounding optimizations to be applied to traditional estimates \cite{DBLP:journals/pvldb/ChenHWSS22}. However, current techniques for cardinality bounds produce loose estimates and  do not support predicates in queries. In this work, we overcome these limitations and describe, SafeBound, a practical system that uses a rich collection of statistics to generate bounds and supports a wide range of predicates.


To place SafeBound in context, we briefly review the landscape of
cardinality estimation techniques.  Prior work falls in three broad
categories: traditional methods, ML methods, and cardinality
bounding methods. Traditional methods are supported by all major
database systems and rely on histograms, distinct counts, and
most-common-value lists, combined with strong assumptions (uniformity,
independence, inclusion of values). These methods are fast
and space efficient but suffer from the drawbacks listed above. Also
under this umbrella are sampling-based methods, which cope with
non-uniformity and correlations, but use indexes during estimation and
do not scale to queries with many joins.  To compensate for this, a
great deal of recent work has gone into creating increasingly complex
ML models that automatically detect these
correlations~\cite{DBLP:journals/pvldb/YangKLLDCS20,DBLP:conf/cidr/KipfKRLBK19,DBLP:journals/pvldb/WangQWWZ21,DBLP:journals/pvldb/LiuD0Z21,DBLP:journals/pvldb/ZhuWHZPQZC21,DBLP:conf/sigmod/KimJSHCC22, DBLP:conf/sigmod/WoltmannHTHL19,DBLP:conf/sigmod/WoltmannHHL20,DBLP:conf/sigmod/WuC21,DBLP:journals/pvldb/WangCLL21} . While
this is certainly an exciting approach and has yielded high accuracy
estimates on existing benchmarks, many practical hurdles remain: they
require a large memory footprint, have a slow training time, and are hard
to
interpret~\cite{DBLP:journals/pvldb/WangQWWZ21,DBLP:conf/sigmod/KimJSHCC22,DBLP:journals/pvldb/HanWWZYTZCQPQZL21}.
Further, these approaches generally treat under and overestimation
identically, which can result in a disconnect between their accuracy
and their effect on workload
runtime~\cite{DBLP:journals/pvldb/NegiMKMTKA21,DBLP:conf/sigmod/KimJSHCC22}.

Previous efforts in cardinality bounding have resulted in good workload runtimes. However, they have been either unable to handle predicates and slow to perform inference or inaccurate and unable to produce guaranteed bounds. PessEst ~\cite{DBLP:conf/sigmod/CaiBS19} hash-partitions each relation and calculates its bound using the cardinality and max degree for each partition. In order to handle predicates, they resort to scanning base tables at inference time, an unacceptable overhead for query optimizers. Further, the inference time grows exponentially in the total number of partitions. Simplicity~\cite{DBLP:conf/cidr/HertzschuchHHL21} uses the same bound as~\cite{DBLP:conf/sigmod/CaiBS19}, but without any hash refinement. Instead, they focus on producing single-table estimates through sampling.  However, this results in loose bounds and no longer produces a guaranteed upper bound on the query's cardinality (see Fig.~\ref{subfig:relative-error}).

Our work draws inspiration from a recent theoretical
result~\cite{DBLP:journals/corr/abs-2201-04166}, which describes an
upper bound formula using the {\em degree sequences} of all join
attributes; this is called the \textit{Degree Sequence Bound}, DSB.
The degree sequence of an attribute is the sorted list
$f_1 \geq f_2 \geq \cdots \geq f_d$ of the frequencies of distinct
values in that column: we illustrate it briefly in
Fig.~\ref{fig:degree_sequence} and define it formally in
Sec.~\ref{sec:problem:definition}.  The degree sequence captures a
rich set of statistics on the relation: the cardinality is
$\sum_i f_i$, the maximum frequency is $f_1$, the number of distinct
values is $d$, etc.  The memory footprint of the full degree sequence
is too large, but it can be compressed using a piecewise
constant function $\hat f$ such that $f \leq \hat f$ (see
Fig.~\ref{fig:degree_sequence}), thus the technique offers a tradeoff
between memory and accuracy.  The authors
in~\cite{DBLP:journals/corr/abs-2201-04166} describe a mathematical
upper bound on output cardinality given in terms of the compressed
degree sequences.  However, the theoretial formalism is not practical:
it ignores predicates; the proposed compression definition
artificially increases the cardinality of the relation; and there is
no concrete compression algorithm.

In this paper, we describe SafeBound, the first practical system for
generating cardinality bounds from degree sequences.  Like any
cardinality estimator, SafeBound has an offline and an online phase
(Sec.~\ref{sec:overview}).  During the offline phase it computes
several degree sequences, then compresses them. 
Unlike~\cite{DBLP:journals/corr/abs-2201-04166}, we also compute
degree sequences {\em conditioned on predicates}: SafeBound supports
equality predicates, range predicates, LIKE, conjunctions, and
disjunctions (Sec.~\ref{subsec:choosing:statistics}).  Next, we
describe a stronger compression method by upper bounding the {\em
  cumulative} degree sequence rather than the degree sequence, and
prove that it preserves the cardinalities of the base tables, yet
still leads to a guaranteed upper bound on the query's output
cardinality (Sec.~\ref{subsec:cdsb}).  Any compression necessarily 
looses some precision, and no compression is optimal for all queries. 
Thus, we describe a heuristic-based compression algorithm in Sec.~\ref{subsec:modeling:cdfs}.
At the end of the offline phase of
SafeBound we have a set of compressed degree sequences.  Next,
during the online phase, SafeBound takes a query consisting of joins
and predicates, and computes a guaranteed upper bound on its output
cardinality, using the compressed degree sequences.  For this purpose
we implemented an algorithm (inspired
by~\cite{DBLP:journals/corr/abs-2201-04166}) that computes the upper
bound in almost-linear time in the total size of all compressed degree
sequences.  In addition to these fundamental techniques, we describe
several optimizations in Sec.~\ref{sec:optimizations}.  Finally, we
evaluate SafeBound empirically in Sec.~\ref{sec:evaluation}. Across
four workloads, SafeBound achieves up to 80\% lower end-to-end
runtimes than PostgreSQL, and is on par or better than state of the
art ML-based estimators and pessimistic cardinality estimators.  Its
performance gains come especially from the expensive queries, because
its guarantees on the cardinality prevent the optimizer from making
overly optimistic decisions. In the tradeoff between accuracy and build time, SafeBound aims for accuracy. This results in slower build times than Postgres although it is 2-20x faster to build than state of the art ML methods. SafeBound also saves up to 500x in query planning time, and uses up to 6.8x less space compared to state of the
art cardinality estimation methods.

\noindent In summary, this paper makes the following contributions:
\begin{itemize}
\item \textit{SafeBound Design:} We describe the architecture of
  SafeBound, the first practical system for cardinality bounds, in
  Sec.~\ref{sec:overview}.
\item \textit{Predicates:} We introduce a scheme for conditioning degree sequences on predicates (Sec.~\ref{subsec:choosing:statistics}).
    \item \textit{Valid Compression:} We describe an improved
      compression of degree sequences that preserves the cardinality
      of the base tables and still leads to a guaranteed upper
      bound (Sec.~\ref{subsec:cdsb}).
    \item \textit{Compression Algorithm:} We present an algorithm for
      compressing a degree sequence with minimal loss (Sec.~\ref{subsec:modeling:cdfs}).
    \item \textit{Optimizations:} We describe several optimizations in
      Sec.~\ref{sec:optimizations}.
    \item \textit{Experimental Evaluation:} We perform a thorough
      experimental evaluation on the JOB-Light, JOB-LightRanges,
      JOB-M, and STATS-CEB benchmarks demonstrating SafeBound's fast inference, low memory overhead, and nearly optimal workload runtimes across all benchmarks. Sec.~\ref{sec:evaluation}
\end{itemize}

\begin{figure}[t]
    \centering
    \includegraphics[width=.45\textwidth, keepaspectratio]{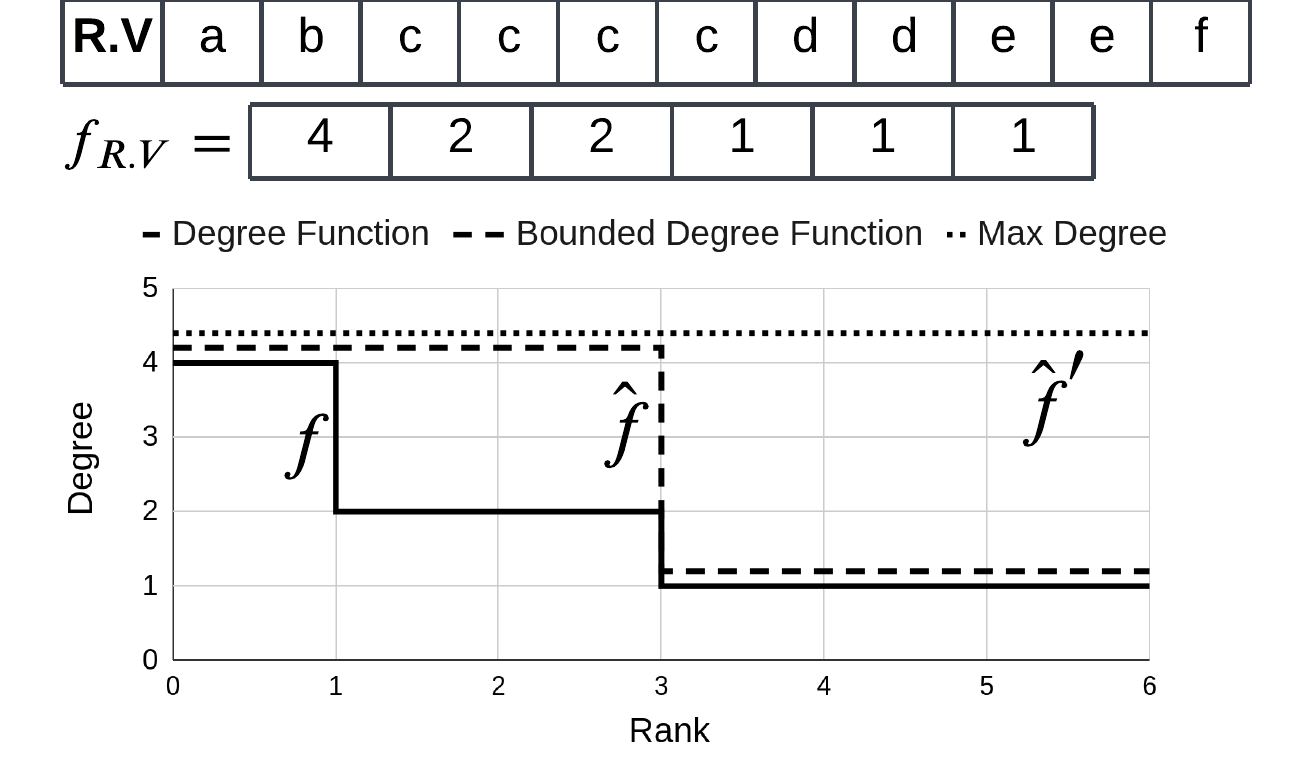}
    \caption{An example column, $R.V$, and its degree sequence
      $f$. The degree sequence can be compressed by upper bounding it
      with a step function.  We show two step functions, $\hat f$ with
      2 segments and $\hat f'$ with 1 segment:
      $f \leq \hat f \leq \hat f'$.  Notice that they will
      overestimate the total cardinality of the relation, from the
      true cardinality 11, to 15, and to 24, respectively.}
    \label{fig:degree_sequence}
\end{figure}

\section{Problem Definition and Background}
\label{sec:problem:definition}

\begin{figure}[t]
    \centering
    \includegraphics[width=.43\textwidth,keepaspectratio=True]{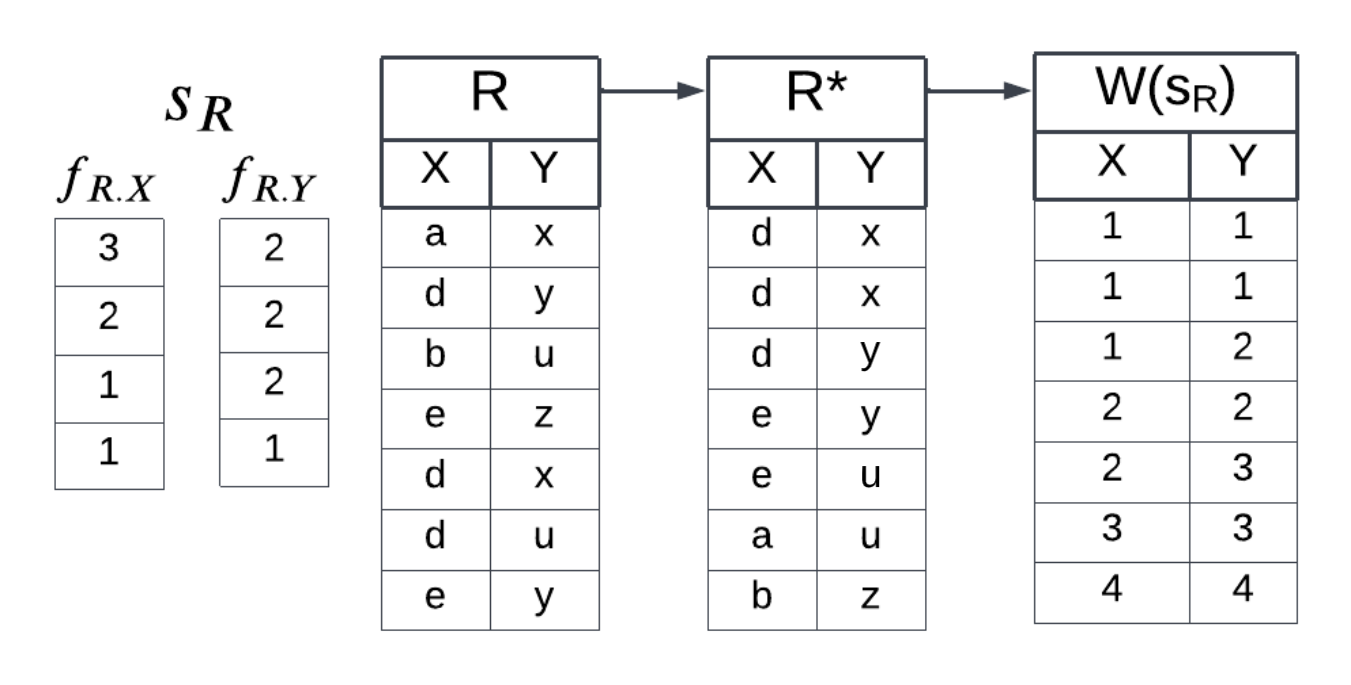}   \caption{This figure shows the conversion of an input table $R(X,Y)$ to the worst-case instance $W(s_R)$ which has the same degree sequences but produces larger join outputs. This requires first sorting the columns individually by frequency then re-labeling join values in order of frequency.}
    \label{fig:worst:case:instance}
\end{figure}

\subsection{The Cardinality Bounding Problem}
\label{subsec:problem}
Let
$Q = R_1(\bm V_1) \wedge \ldots \wedge R_k(\bm V_k) \wedge P_1 \wedge
\ldots \wedge P_l$ be a query, expressed using datalog notation, over
relations $R_i\in\bm R$ with variables $\bm V_i\subseteq\bm V$ and
sets of predicates $\bm P_1\ldots \bm P_l$. To match the setting of \cite{DBLP:journals/corr/abs-2201-04166}, we generally assume that queries are acyclic and that any join between two relations
is on a single attribute. These queries, called Berge-acyclic queries,
include many natural queries\footnote{On closer examination, the queries supported
  by~\cite{DBLP:conf/sigmod/CaiBS19,DBLP:conf/cidr/HertzschuchHHL21}
  are also Berge-acyclic.}, like chain, star, snowflake
queries~\cite{fagin1983degrees}. However, we go on to provide extensions to handle 
general queries as well in Sec. \ref{subsec:cyclic-multi-key}. We further assume that $Q$ is a full
conjunctive query, i.e. $\texttt{select *}$ in SQL,
and uses bag semantics: both the input relations and the
output are bags.  Our goal is to compute a bound on its output.

Let $D$ be a database instance with statistics $s$. The goal of
cardinality bounding is to generate a bound $B(Q, s)$ such that the
following is true,
\begin{align}\label{def:cardinality:bounding}
\max_{D\models s} |Q(D)| \leq B(Q, s)
\end{align}
The bound, $B(Q,s)$, guarantees that the query $Q$ will have true
cardinality less than $B(Q, s)$ when executed on any database
consistent with the given statistics, $s$, which we denote $D \models
s$. We say that a bound $B(Q,s)$ is \textit{tight} if there exists a
database $D$ consistent with $s$ for which the inequality
\eqref{def:cardinality:bounding} is an equality. In this paper, $s$
will specifically refer to the degree sequences.

\subsection{The Degree Sequence}
The core statistic in our system, SafeBound, is the degree sequence
(DS), so we describe it in detail here. Consider a variable $V$
(a.k.a. column) of a relation $R$ with domain $\mathbb{D}_V$ and
values $v\in \mathbb{D}_V$. The frequency of a particular value $v$ in
relation $R$ is $|\sigma_{V=v}(R)|$. Ranking the values in descending
order by frequency, we get a list of values
$v^{(1)},\ldots,v^{(|\mathbb{D}_V|)}$ whose frequencies,
$f_{R.V,i} \defeq |\sigma_{V=v^{(i)}}(R)|$, form the \textit{degree
  sequence} of $R.V$: $f_{R.V,1} \geq f_{R.V,2} \geq \cdots$ The index
$i$ is called the {\em rank}.  The {\em cumulative degree sequence},
CDS, is the running sum of the DS, i.e.
$F_{R.V,i} \defeq \sum_{j=1}^if_{R.V,j}$. In the other direction, the
DS can be defined as the \textit{discrete derivative} of the CDS,
$f_{R.V,i} = \Delta_i F_{R.V,i} = F_{R.V,i}-F_{R.V,i-1}$.

In this paper, we often manipulate these sequences as piecewise functions and may use the following notation, $f_{R.V}(i)= f_{R.V,i}$ and
$F_{R.V}(i) = F_{R.V,i}$, for the DS and CDS, respectively. The set of DSs for a relation $R$ will be denoted by $s_R$, and the set of DSs for all relations by $s$; similarly, $S_R$ and $S$ stand for the CDSs of one relation, or all
relations.  Let $s, \hat s$ be two sets of statistics; we write
$s \leq \hat s$ to mean that $f_{R.V}(i) \leq \hat f_{R.V}(i)$ for all
relations $R$, attributes $V$, and ranks $i$.  Given a set of
statistics $\hat s$ and a database $D$, we say that $D$ is
\textit{consistent} with $\hat s$ if $s(D) \leq \hat s$, where $s(D)$
are the degree sequences for the instance $D$.  An example of these
concepts can be seen in Figure \ref{fig:degree_sequence}. At the top
are the actual values of a column $R.V$ of the database, and below is
the degree sequence: the value $4$ corresponds to $c,c,c,c$, the
values $2,2$ correspond to $d,d$ and $e,e$, etc.  The graph shows the
degree sequence $f$ as a solid line.  In general, the degree sequence
can be large.  For the purpose of computing an upper bound of the size
of the query, it suffices to compress the DS by upper bounding it
using a piecewise constant function: two such functions are shown in
the figure, $f \leq \hat f \leq \hat f'$.

\subsection{The Degree Sequence Bound} \label{subsec:dsb}
Previous work~\cite{DBLP:journals/corr/abs-2201-04166} described a
method to compute a tight upper bound on the query's output from the
degree sequences, called the {\em degree sequence bound}, DSB, which
we briefly describe here.

\noindent\textbf{The Worst-Case Instance:} The DSB is defined in terms
of a \textit{worst-case relation}, $W(s_R)$, associated with the
statistics $s_R$, and the \textit{worst-case instance} of the
database, $W(s)$, consisting of all relations $W(s_R)$.  These
relations are defined such that the size of any query's answer on
$W(s)$ is an upper bound of its size on any database consistent with
$s$; in this sense, $W(s)$ is the ``worst'' instance.  Because degree
sequences do not capture the correlation between columns within
relations or between relations, the worst-case assumption is that the
frequency of values is perfectly correlated across columns both within
and between relations: i.e., high frequency values occur in the same
tuples within relations and join with high frequency values across
relations.

We illustrate how to produce a worst-case relation $W(s_R)$ from a
binary relation $R(X,Y)$ in
Figure~\ref{fig:worst:case:instance}. While we are given only the
statistics $s_R$, i.e.  the two degree sequences $f_{R.X}, f_{R.Y}$,
we also show an instance $R$, to help build some intuition into
$W(s_R)$.  First, we sort each column independently by frequency to
produce $R^*$. This ensures that high-frequency values occur in the
same tuples. Next, we relabel our join values in order of frequency to
produce $W(s_R)$, i.e. $x^{(1)}=1,x^{(2)}=2,\ldots$. This ensures
that, when we join, say, $R$ and $T$, the high-frequency values in
$W(s_R)$ will join with the high-frequency values in $W(s_T)$.

\noindent\textbf{The Degree Sequence Bound:}
The DSB is the size of the query $Q$ run on the worst-case instance
$W(s)$, in other words, $|Q(W(s))|$. The following is shown
in~\cite{DBLP:journals/corr/abs-2201-04166}:

\begin{thm} Suppose $Q$ is a Berge-acyclic query, and let $s$ be a set of degree sequences, one for each attribute of each relation. Then, the following is true,
\begin{align}
    \forall D \models s\quad |Q(D)| &\leq  |Q(W(s))| \label{eq:dsb:icdt}
\end{align}
Further, $W(s) \models s$, which proves that the bound
in~\eqref{eq:dsb:icdt} is tight.
\label{th:dsb}
\end{thm}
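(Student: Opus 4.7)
The plan is to prove the two claims separately. For $W(s) \models s$, the construction of $W(s_R)$ sorts each column independently by frequency and then relabels values by their frequency rank; both operations preserve the multiset of values, and hence the multiset of frequencies, in each column. So the degree sequence on any attribute of $W(s_R)$ coincides with the given one. This is essentially a bookkeeping argument.

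For the main inequality $|Q(D)| \leq |Q(W(s))|$ with $D \models s$, I would induct on the number of relations in the Berge-acyclic query tree. The base case of a single relation is immediate because the cardinality equals the sum of any of its degree sequences, and this sum is the same on $D$ and on $W(s)$. For the inductive step, pick a leaf relation $R$ attached to the rest of the tree $Q'$ through a single join attribute $V$ and write
\begin{equation*}
|Q(D)| \;=\; \sum_{v} f^D_{R.V}(v) \cdot g^D(v),
\end{equation*}
where $g^D(v)$ counts the tuples of $Q'(D)$ whose $V$-value is $v$. The inductive hypothesis controls the $Q'$ side, and the rearrangement inequality says that such a weighted sum is maximized when the two sequences $f_{R.V}(\cdot)$ and $g(\cdot)$ are both ranked in decreasing order and aligned. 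The cross-relation relabeling step in the construction of $W$ enforces exactly this alignment: attribute values on either side of a join are identified by their common rank, so the top frequency of $R.V$ meets the top value of $g$, the second meets the second, and so on.

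The main obstacle is the within-relation sorting step for internal nodes of the join tree. A non-leaf relation $S$ has several attributes $V_1, \ldots, V_k$ that participate in different joins, and the contribution of $S$ to $|Q(D)|$ depends on the joint distribution of those attributes, not merely on the marginals recorded in $s_S$. The sorting step replaces $S$ with a comonotone relation $S^{\ast}$ whose tuples align high-frequency values across all columns. I would justify this step by a coupling/majorization argument: any relation with the same marginal degree sequences can be transformed into $S^{\ast}$ by a sequence of intra-column transpositions, and for each such swap the resulting change in $|Q(D)|$ can be signed via a local rearrangement inequality on the subtree rooted at $S$. Berge-acyclicity is essential here because each join uses a single attribute and the tree structure lets one localize the effect of a swap to one column at a time, avoiding the joint-distribution tangles that would arise for cyclic or multi-attribute joins.

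Combining the internal sorting with the global relabeling, the two operations together only raise $|Q(D)|$, yielding $|Q(D)| \leq |Q(W(s))|$; together with $W(s) \models s$ from the first step, this also establishes tightness, since $W(s)$ is itself a database consistent with $s$ that attains the bound.
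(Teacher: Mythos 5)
First, note that the paper does not prove Theorem~\ref{th:dsb} itself: it imports the statement verbatim from the cited Degree Sequence Bound work, so there is no in-paper argument to compare against and your attempt must be judged on its own. Your first claim is fine: sorting each column independently and relabeling values by frequency rank preserve each column's multiset of frequencies, so every attribute of $W(s)$ has degree sequence exactly the one prescribed by $s$, hence $W(s)\models s$.

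The main inequality, however, has a genuine gap in the inductive step. When you peel off a leaf $R$ and write $|Q(D)|=\sum_v f^D_{R.V}(v)\,g^D(v)$, the rearrangement inequality lets you compare this against the worst-case sum only if you know that the sorted \emph{prefix sums} of $g^D$ are dominated by those of $g^{W}$, i.e.\ $\sum_{j\le i} g^D_{(j)} \le \sum_{j\le i} g^{W}(j)$ for every rank $i$, where $g^D_{(j)}$ is the $j$-th largest value of $g^D$. Your inductive hypothesis only controls the total, $|Q'(D)|\le |Q'(W(s))|$, and that is not enough: take $f=(10,1)$ and compare a subquery weight vector $(11,0)$ (total $11$) against $(6,6)$ (total $12$); the aligned sums are $10\cdot 11+1\cdot 0=110$ versus $10\cdot 6+1\cdot 6=66$, so the instance with the \emph{smaller} total wins. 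The invariant must therefore be strengthened to: at every node of the join tree, the cumulative sorted multiplicity function of the intermediate unary relation computed on $D$ is pointwise dominated by the corresponding function on $W(s)$ --- which is precisely why the surrounding machinery (including the paper's Theorem~\ref{thm:cdf:upper:bound}) is phrased in terms of cumulative degree sequences rather than totals. Two smaller omissions: (i) $D\models s$ only requires $s(D)\le s$, not equality, so your transposition argument turns a relation of $D$ into the worst case of \emph{its own} degree sequences, and you still need the monotonicity lemma that pointwise increasing degree sequences cannot decrease the worst-case output; (ii) the order in which you sort internal relations and replace subtrees matters, since the weights $g_\ell$ entering each local rearrangement step depend on which subtrees have already been worst-cased. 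The column-transposition idea and the use of Berge-acyclicity to localize each swap are the right ingredients, but without the prefix-sum invariant the induction does not close.
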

The theorem forms the theoretical basis for compression.  Suppose that
$\hat s$ is a compressed (lossy) representation of $s$, such that
$s \leq \hat s$.  For example, $\hat s$ may replace the degree
sequences with piecewise constant functions with a small number of
segments, as in Fig~\ref{fig:degree_sequence}.  Then $|Q(W(\hat s))|$
is still an upper bound for $Q(D)$, because $D \models s$ implies
$D \models \hat s$, and we apply the theorem to $\hat s$.

\subsection{Discussion} \label{subsec:dsb:discussion}

In summary, the theory in~\cite{DBLP:journals/corr/abs-2201-04166}
introduced the DSB, and suggested compressing the degree sequences;
they also described an efficient algorithm for the DSB, which we
implemented (reviewed in Sec.~\ref{subsec:fdsb}).  Using degree
sequences for cardinality bounding is attractive, because they capture
several popular statistics used in cardinality estimation: the number
of distinct values in a column $R.V$ is $\|f_{R.V}\|_0$; the
cardinality of the relation $R$ is $\|f_{R.V}\|_1$; and the maximum
degree (or maximum frequency) is $\|f_{R.V}\|_\infty$.  However, the
framework in~\cite{DBLP:journals/corr/abs-2201-04166} is impractical,
for several reasons:
\begin{enumerate}
\item The DSB in Theorem~\ref{th:dsb} does not take predicates into
  account; if the query contains some predicate, e.g. $R.B=2$, then
  the DSB simply ignores it, which means that the upper bound in
  Equation~\eqref{eq:dsb:icdt} will be a huge overestimate.
\item If the compressed sequences $\hat s$ dominate $s$, i.e.
  $s \leq \hat s$, then they artificially increase the estimated
  cardinality of the relations, which increases the DSB.
\item While it describes a bound based on compressed degree sequences, no concrete algorithm for compressing degree sequences is suggested in~\cite{DBLP:journals/corr/abs-2201-04166}.
\end{enumerate}

In this paper, we make several contributions to address these
limitations, and describe SafeBound, a practical cardinality bounding
system.

\section{SafeBound} \label{sec:safebound}

In this section, we present SafeBound, the first practical system for
cardinality bounding, which is based on several extensions of the results in~\cite{DBLP:journals/corr/abs-2201-04166}.  We start with an
overview of SafeBound.

\subsection{Overview} \label{sec:overview}

SafeBound has an offline and an online phase.  During the {\em offline
  phase} the system computes the degree sequences of every join-able
attribute of every relation (keys and foreign keys). In addition it
also considers a variety of predicate types on each relation, and
computes refined degree sequences conditioned on those predicates:
this is described in Sec.~\ref{subsec:choosing:statistics}.  The last
step of the offline phase consists of compressing the degree
sequences; this is described in Sec.~\ref{subsec:modeling:cdfs}. Rather than directly compressing the degree sequence as allowed by~\cite{DBLP:journals/corr/abs-2201-04166}, it compresses the cumulative degree sequence which does not inflate the
cardinality of the relation and requires a new correctness proof.
This is described in Sec.~\ref{subsec:cdsb}.  During the {\em online
  phase}, SafeBound receives a query, as defined in
Sec.~\ref{sec:problem:definition}, and computes an upper bound on the query's output using the pre-computed compressed degree sequences.  It does not apply
the formula in Theorem~\ref{th:dsb} naively, but, instead, it
implements a fast algorithm that runs in time proportional to the
total size of the compressed sequences; this is described in
Sec.~\ref{subsec:fdsb}.

\begin{example}\label{ex:running:example}
  We will refer to the following running example:
\begin{align*}
  Q \defeq & R(X,A,B)\wedge S(X,Y,C) \wedge T(Y) \\
    \wedge & (A<5) \wedge (B=2) \wedge (C\,\,\texttt{LIKE  '\%Abdul\%'})
\end{align*}
During the offline phase (before the query arrives) SafeBound computes
the degree sequences for $R.X, R.A, \ldots, T.Y$, as well as degree
sequences conditioned on predicates, e.g. degree sequences of
$R.X$ conditioned on range predicates applied to $R.A$.  All these degree sequences are compressed and stored. During the online phase,
SafeBound takes the query $Q$ above, and uses all available degree
sequences to compute an upper bound to the query's output.
\end{example}

We will use the following terminology.  A column to which a predicate
is applied is called a \textit{filter column}; a column used in a join
is called a \textit{join column}.  Note that a column can be both a
filter column and a join column.

\begin{figure*}
    \centering
    \includegraphics[width=.88\textwidth, height=110pt]{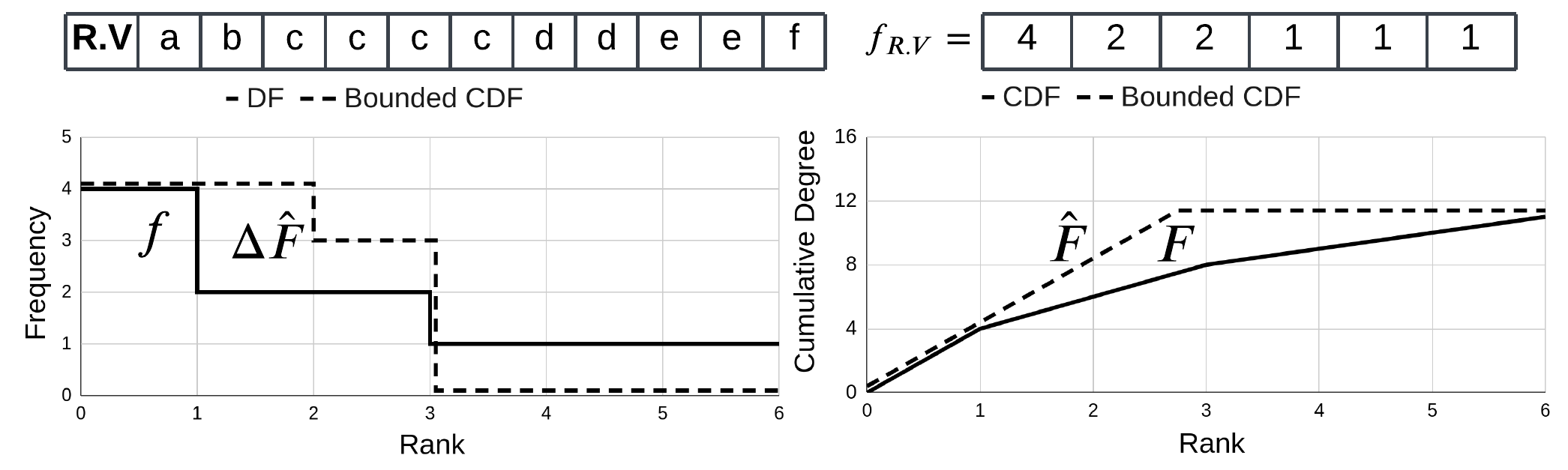}
    \caption{We show here how to compress the CDS of the column
      Fig.~\ref{fig:degree_sequence}.  The degree sequence $f$ on the
      left has the cumulative degree sequence $F$ on the right.  We
      upper bound the latter by $\hat F$, a piecewise linear
      function with two segments.  Notice that the cardinality of the
      relation is preserved: $|R|=F(6)=\hat F(6)=11$.  The degree sequence $\hat f \defeq \Delta \hat F$ associated to $\hat F$ no longer dominates the original $f$. Yet, Theorem~\ref{thm:cdf:upper:bound} proves that we can still use $\hat f$ to compute an upper bound on the
      cardinality of a query.}
    \label{fig:cdf:diagram}
\end{figure*}

\subsection{Conditioning on  Predicates}\label{subsec:choosing:statistics}

Predicates in a query can significantly reduce the cardinality of the
output.  SafeBound accounds for predicates by computing additional
degree sequences for each join column by {\em conditioning} on
predicates.  SafeBound supports five types of predicates: equality,
range, conjunctions, LIKE, and disjunctions. In this section we assume
that all degree sequences are exact; once we replace them with
compressed sequences in the next section, we will discuss some minor
adjustments to the formulas introduced here.

\noindent\textbf{Equality Predicates:} The main idea is the
following.  Let $V$ be a join column of a relation $R$, and consider a
query that includes an equality predicate, $R.A = a_k$, for some
constant $a_k \in \mathbb{D}_A$.  If we ignore the predicate and
compute the upper bound of the query without this predicate, then we
massively overestimate the query's cardinality.  There are two extreme
ways to account for the equality predicate.  At one extreme, we could
compute a separate degree sequence, $f_{R.V|(A=a_\ell)}$, for each value $a_1, a_2, \ldots \in \mathbb{D}_A$ from the subset
$\sigma_{A=a_\ell}(R)$.  At query estimation time, we use the degree
sequence $f_{R.V|(A=a_k)}$ instead of the unconditioned sequence
$f_{R.V}$.  But this approach is prohibitive, because it requires
storing a large number of degree sequences.  At the other extreme, we
could store only the max of all these individual degree sequences,
\begin{align}
f_{R.V|A} \defeq & \max_\ell f_{R.V|(A=a_\ell)} \label{eq:max:eq}
\end{align}
and use this to estimate the query's upper bound.  This uses very
little space, only one DS which we use for every equality predicate
on $A$, no matter the constant $a_k$.  But this may lead to
significant over-approximations of the upper bound.  SafeBound adopts
a compromise.  For each attribute $A$, it computes a Most Common Value
(MCV) list, computes separate degree sequences for $V$ conditioned on
each of these values, and computes one default DS, given by
formula~\eqref{eq:max:eq}, where $a_\ell$ ranges over non-MCV values.
At estimation time, if $a_k$ is in the MCV list then we use its own
degree sequence, otherwise we use the default one.  In our system, we
chose between $1000$ and $5000$ values to include in every MCV list.  We
will slightly revisit Eq.~\eqref{eq:max:eq} in the next section.

\noindent\textbf{Range Predicates:} To handle range predicates,
SafeBound uses a data structure that builds on the idea of traditional
histograms. The naive way to do this is to compute an equi-depth
histogram over $R.A$ where each bucket stores the degree sequence of
$R.V$ restricted to that bucket. However, a range predicate may
overlap multiple buckets which would require us to return the
summation of the degree sequences in those buckets. This summation
artificially inflates the highest frequency values, increasing the DSB significantly.
Instead, we build a hierarchy of equi-depth histograms with
$2^k,2^{k-1},\ldots,2^1$ buckets. At query time, we identify the
smallest bucket which fully encapsulates the range query and return
the degree sequence stored there. In our system, we typically used
$k=7$.

\noindent\textbf{Conjunctions:} Suppose that the query contains
a conjunction of predicates on the same relation $R$:
$P_1(R.A) \wedge P_2(R.B) \wedge \cdots$ For each predicate, we have
computed a degree sequence conditioned on that predicate,
$f_{R.V|A}, f_{R.V|B}, \ldots$ Then, we take their minimum:
\begin{align}
f_{R.V|(A \wedge B \wedge \cdots)}(i) \defeq \min (f_{R.V|A}(i), f_{R.V|B}(i), \ldots) \label{eq:min:conjunction}
\end{align}
Referring to our running Example~\ref{ex:running:example}, there are
two predicates on $R$.  The degree sequence $f_{R.X}$ conditioned on
the conjunction is
$f(i) \defeq \min(f_{R.X|(A<5)}(i), f_{R.X|(B=2)}(i))$.

\noindent\textbf{LIKE Predicates:} 
SafeBound converts a predicate $R.A\ \texttt{LIKE'\%xyzu\%'}$ into
a conjunction of predicates on 3-grams.  For every attribute
$R.A$ of type \texttt{text}, SafeBound first computes an MCV list of 3-grams that occur in $A$ and calculates the degree sequences of $R.V$ conditioned on all 3-grams that occur in the MCV. Separately, it calculates the degree sequence of $R.V$ conditioned on $A$ not containing any 3-gram in the MCV.  At query time, we split the text in the \texttt{LIKE} predicate into 3-grams, then compute the $\min$ of all degree sequences conditioned on each 3-gram that occurs in the MCV list. If none of the predicate's 3-grams appear in this list, we use the degree sequence conditioned on not containing common 3-grams. Referring to our running Example~\ref{ex:running:example}, the string \texttt{'\%Abdul\%'} is split into the 3-grams \texttt{Abd}, \texttt{bdu}, \texttt{dul}; for each 3-gram we retrieve the degree sequence $S.X$ conditioned on that 3-gram, then compute their min (or take the default if none of them are in the MCV list); we apply the same process to compute the degree sequence of $S.Y$.

\noindent\textbf{Disjunctions:} Suppose a query has a disjunction of
predicates over a relation $R$,
$P_{1}(R.A_1)\vee\cdots\vee P_k(R.A_k)$.  For each predicate we have
the degree sequence of $R.V$ conditioned on it; then, we take their
sum.  For example, the \texttt{IN} predicate in SQL is treated as a
disjunction: the degree sequence condition on of
$R.A \text{ IN ['a','b','c']}$ is
$f_{R.V|(A=a)}+f_{R.V|(A=b)}+f_{R.V|(A=c)}$.

\begin{example}
  The \texttt{Title} relation in the JOBLightRanges benchmark has 7
  filter columns (\texttt{episode\_nr}, \texttt{season\_nr},
  \texttt{production\_year}, \texttt{series\_years},
  \texttt{phonetic\_code}, \texttt{series\_years},
  \texttt{imdb\_index}) and two join columns (\texttt{id},
  \texttt{kind\_id}). This results in seven histograms, MCV lists,
  and, for the string attributes, 3-gram lists which store 2 degree
  sequences per bin, MCV, and tri-gram, respectively. In total, there
  are $18,522$ degree sequences for the relation \texttt{Title}, each describing a subset of the table. This
  motivates our compression technique in Sec.~\ref{subsec:cdsb}
  and~\ref{subsec:modeling:cdfs}, and the group compression in
  Sec.~\ref{subsec:grouping:CDS:sets}.
\end{example} 

\noindent\textbf{Discussion:} SafeBound computes DS's only for {\em join columns} (keys and foreign keys), each conditioned on every {\em filter column}.  In theory this could lead to $O(n^2)$ DS's for a table with $n$ attributes, but in practice, a typical table has $O(1)$ foreign keys, resulting in $O(n)$ DS’s per table.


\subsection{Valid Compression}\label{subsec:cdsb}

The degree sequence statistics, $s$, are as large as the database
instance, $D$, hence they are impractical for cardinality estimation.
SafeBound compresses each degree sequence using a piecewise constant
function with a small number of segments.  We denote by $\hat s$ the
collection of compressed degree sequences.  As we saw in
Theorem~\ref{th:dsb}, the theoretical results
in~\cite{DBLP:journals/corr/abs-2201-04166} required
$D \models \hat s$, meaning that that every degree sequence $f_{R.V}$
of the database $D$ is dominated by the corresponding DS in $\hat s$:
$f_{R.V}(i) \leq \hat f_{R.V}(i)$.  The problem is that, if we
increase the degree sequence $f_{R.V}$ to $\hat f_{R.V}$, then the
cardinality of the worst case relation $W(\hat s_R)$ will increase
artificially, from $\sum_i f_{R.V}(i)$ to $\sum_i \hat f_{R.V}(i)$. This leads to poor upper bounds.  In this section we introduce a
stronger compression method, which does not increase the cardinality
of the worst case instance.


Our new idea is that, instead of dominating the degree sequence,
$f_{R.V} \leq \hat f_{R.V}$, we will dominate the {\em cumulative}
degree sequence\footnote{Recall that
  $F_{R.V}(i) \defeq \sum_{j\leq i}f_{R.V}(i)$ and
  $\hat F_{R.V}(i) \defeq \sum_{j\leq i}\hat f_{R.V}(i)$.}
$F_{R.V}(i) \leq \hat F_{R.V}(i)$.  Obviously, if the DS is dominated,
then the CDS is dominated too, in other words
$f_{R.V} \leq \hat f_{R.V}$ implies $F_{R.V} \leq \hat F_{R.V}$, but
the converse does not hold in general, as is illustrated in
Fig.~\ref{fig:cdf:diagram}.  The advantage is that we can dominate the
CDS yet still preserve the cardinality, by ensuring
$|R| = F_{R.V}(|\mathbb{D}_V|) = \hat F_{R.V}(|\mathbb{D}_V|)$, see
Fig.~\ref{fig:cdf:diagram}, but the problem is that we can no longer
use Theorem~\ref{th:dsb} to conclude that the compressed sequence
leads to an upper bound.  We show that the upper bound still holds, by
proving the new theorem below.  We denote a collection of CDS by
$\hat S$, and write $D \models \hat S$ if every CDS of $D$ is
dominated by the corresponding CDS in $\hat S$.  Recall that
$\Delta F(i) \defeq F(i)-F(i-1)$, thus $\Delta \hat S$ represents the
DS associated to the CDS in $\hat S$.

\begin{thm}\label{thm:cdf:upper:bound}
Suppose $Q$ is a Berge-acyclic query, and let $\hat S$ be a set of cumulative degree sequences, one for each relation and each attribute. Then, the following is true,
\begin{align}
    \forall\,D\models \hat S \quad |Q(D)|\leq |Q(W(\Delta \hat S))| \label{eq:cdf:upper:bound}
\end{align}
\end{thm}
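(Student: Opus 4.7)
The plan is to combine Theorem~\ref{th:dsb} with a coordinatewise monotonicity argument on the worst-case size.  First I would apply Theorem~\ref{th:dsb} to $D$ using its own degree sequences $s(D)$, giving $|Q(D)| \leq |Q(W(s(D)))|$.  Writing $F \defeq S(D)$ for the cumulative sequences of $D$ and $\hat F \defeq \hat S$, the hypothesis $D \models \hat S$ is exactly $F_{R.V}(i) \leq \hat F_{R.V}(i)$ for every $R,V,i$.  It therefore suffices to prove the following lemma: whenever $F \leq \hat F$ pointwise and both $\Delta F$ and $\Delta \hat F$ are valid (non-increasing, non-negative) degree sequences, $|Q(W(\Delta F))| \leq |Q(W(\Delta \hat F))|$.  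My strategy is to view $|Q(W(\Delta F))|$ as a formal polynomial in the variables $\{F_{R.V}(i)\}_{R,V,i}$ and show that every partial derivative $\partial/\partial F_{R.V}(k)$ is non-negative, so one can interpolate coordinatewise from $F$ up to $\hat F$.

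For the base case of a single binary join $R(Y) \bowtie S(Y)$, the worst-case size is
\begin{align*}
|Q(W(\Delta F))| = \sum_{i}\bigl(F_R(i)-F_R(i-1)\bigr)\bigl(F_S(i)-F_S(i-1)\bigr).
\end{align*}
A direct computation yields $\partial/\partial F_R(k) = f_S(k) - f_S(k+1)$ for $k<n$ (with the convention $f_S(n+1) \defeq 0$) and $\partial/\partial F_R(n) = f_S(n)$, both non-negative because $f_S$ is non-increasing and non-negative; the symmetric statement for $F_S$ follows identically.  For a general Berge-acyclic query, I would root the join tree at an arbitrary relation; the contribution of a child subtree joined to its parent on attribute $Y$ then factors as $\sum_i f_{R.Y}(i)\cdot w_Y(i)$, where $w_Y(i)$ is an inductively defined subtree weight at rank $i$.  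Abel summation rewrites this as $w_Y(n)\,F_{R.Y}(n) + \sum_{i<n}(w_Y(i)-w_Y(i+1))\,F_{R.Y}(i)$, which is a non-negative linear combination of the $F_{R.Y}(i)$ provided $w_Y$ is itself non-increasing in $i$.

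The main obstacle is verifying that each subtree weight $w_Y(i)$ is non-increasing in $i$.  This is precisely where Berge-acyclicity is essential: because each variable appears in a tree-like fashion and all joins are on a single attribute, the worst-case construction of Fig.~\ref{fig:worst:case:instance} aligns high-frequency values simultaneously across all relations, so every factor contributing to $w_Y(i)$ is a product of frequencies (from other relations' degree sequences) evaluated at ranks that weakly decrease as $i$ grows.  Monotonicity of $w_Y$ therefore propagates by structural induction on the join tree, and once monotonicity of $|Q(W(\Delta F))|$ in each $F_{R.V}(k)$ is established, moving coordinatewise from $F$ to $\hat F$ yields $|Q(W(\Delta S(D)))| \leq |Q(W(\Delta \hat S))|$.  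Combined with the initial reduction through Theorem~\ref{th:dsb} this gives $|Q(D)| \leq |Q(W(\Delta \hat S))|$, as required.
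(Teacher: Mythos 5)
The paper does not actually include its proof of Theorem~\ref{thm:cdf:upper:bound} in the text --- it is deferred to the online appendix --- so I can only judge your proposal on its own merits. Your overall strategy (first apply Theorem~\ref{th:dsb} to $D$ with its exact statistics to get $|Q(D)| \leq |Q(W(s(D)))|$, then show that $|Q(W(\Delta F))|$ is coordinatewise monotone in the values $F_{R.V}(i)$ so that CDS domination can be pushed through) is the natural route and is very plausibly the skeleton of the actual argument. Your base case of a single binary join is correct: Abel summation pairs each $F_R(k)$ with the non-negative weight $f_S(k)-f_S(k+1)$, which does not depend on $F_R$ itself, so the coordinatewise increase is legitimate there.

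The genuine gap is in the inductive step, specifically monotonicity in the CDSs of \emph{internal} relations of the join tree. For a relation $R(X_0,X_1,\ldots,X_k)$ at an internal node, the worst-case instance aligns the columns of $R$ by sorting each independently and pairing them positionally, so a child $A_\ell$ contributes through a rank re-alignment of the form $F_{R.X_\ell}^{-1}(F_{R.X_0}(i))$ (cf.\ line~\ref{alg:line:invert} of Algorithm~\ref{alg:FDSB}). Consequently the subtree weight $w_{X_0}(i)$ multiplying $f_{R.X_0}(i)$ depends on $F_{R.X_0}$ and the $F_{R.X_\ell}$ \emph{themselves}, not only on the other relations' sequences: increasing $F_{R.X_0}(i)$ simultaneously increases the frequency $f_{R.X_0}(i)$ and pushes the lookup ranks $F_{R.X_\ell}^{-1}(F_{R.X_0}(i))$ upward, which \emph{decreases} the per-rank weight. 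Your termwise partial-derivative computation therefore does not go through at the level of ranks; the competing effects must be disentangled, e.g.\ by re-indexing the sum over tuples $j\in\{1,\ldots,|R|\}$ rather than over ranks $i$, so that each factor $f_{A_\ell}(F_{R.X_\ell}^{-1}(j))$ is separately monotone in $F_{R.X_\ell}$ (a pointwise larger CDS gives a smaller generalized inverse, hence a larger frequency). Relatedly, your justification of the monotonicity of $w_Y(i)$ has the direction reversed: as $i$ grows the looked-up ranks weakly \emph{increase}; it is the frequencies evaluated at those ranks that decrease. Finally, the coordinatewise interpolation needs a specified order of updates, since at intermediate points the sequence being modified need not have a non-increasing discrete derivative, and the non-negativity of the Abel-summation weights for one column relies on the \emph{other} sequences still being valid degree sequences. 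None of this appears fatal --- the statement is true and the plan looks repairable --- but as written the heart of the induction is asserted rather than proved.
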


The proof of this new theorem is a non-trivial extension of
Theorem~\ref{th:dsb}, and uses some results
from~\cite{DBLP:journals/corr/abs-2201-04166} as well as new results;
we defer it to our online appendix \cite{github}.
%
This justifies the following:

\begin{definition} \label{def:safe:compression} Let $R.V$ be a column
  with degree sequence $f_{R.V}$.  We say that $\hat f$ is {\em valid}
  for $f_{R.V}$ if (a) it is a degree sequence, meaning it is
  non-increasing, $\hat{f}(i-1) \geq \hat{f}(i)$, (b) its CDS
  $\hat F(i) \defeq \sum_{j \leq i}\hat{f}(i)$ dominates the CDS of $R.V$:
  $F_{R,V}(i) \defeq \sum_{j\leq i} f_{R.V}(i)) \leq \hat F(i)$,
  $\forall i$, (c) it preserves the cardinality
  $|R|= F_{R.V}(|\mathbb{D}_V|) = \hat F(|\mathbb{D}_V|)$.
\end{definition}

To summarize, SafeBound compresses every DS $f_{R.V}$ into a valid DS
$\hat f_{R.V}$.  Since $\hat f_{R.V}$ no longer dominates $f_{R.V}$,
we need to make some small adjustments to the way we compute
conditioned degree sequences in Sec.~\ref{subsec:choosing:statistics},
as follows. The max-degree sequence for the non-MCV values,
Eq.~\eqref{eq:max:eq}, will be computed over CDS rather than DS, in
other words we replace it with
$\hat F_{R.V|A} \defeq \max_\ell \hat F_{R.V|(A=a_\ell)}$: this,
improves the bound.  The min-degree sequence for a conjunction of
predicates in Eq.~\eqref{eq:min:conjunction} will also use the CDS, in
other words it becomes
$\hat F_{R.V|(A \wedge B \wedge \cdots)}(i) \defeq \min (\hat
F_{R.V|A}(i), \hat F_{R.V|B}(i), \ldots)$: this worsens the bound, but
this is necessary to ensure correctness.  All other computations
remain unchanged, with each $f$ replaced by $\hat f$. Next, we discuss how
to compute a good valid compression for a given degree sequence.

\subsection{The Compression Algorithm}\label{subsec:modeling:cdfs}

In this section, we describe the compression algorithm of a degree
sequence $f$ to $\hat f$.  Function approximations are defined by a
model class (e.g. polynomial, sinusoidal, etc), a loss function
(e.g. mean squared error), and an approximation algorithm
(e.g. gradient descent, convex hull, etc).  We have three
requirements: (1) the approximation of the CDS must be an upper bound
of the original CDS, i.e.  $F \leq \hat F$ (by
Th.~\ref{thm:cdf:upper:bound}), (2) the model class must be closed
under both multiplication with the derivative and composition, and (3)
the model class must be invertible.  The last two requirements are
needed by Algorithm~\ref{alg:FDSB}, which is described in the next section. Given these requirements, SafeBound uses piecewise linear
representation for $\hat F$, or, equivalently, piecewise constant for
$\hat f$:

\begin{defn}
  A function $f(x)$ is \textit{$k$-piecewise linear} over the domain
  $(m_0, m_{k}]$ if there exist tuples $\{(m_0,m_1, a_1x+b_1),\ldots,$
  \\$(m_{k-1},m_{k}, a_kx+b_k)\}$ such that $f(x)=a_ix +b_i\,\,\forall
  x\in(m_{i-1},m_i]$\\$\forall 1\leq i\leq k$. We call the values,
  $m_0, \ldots, m_k$, {\em dividers} and call each interval,
  $(m_{\ell-1}, m_\ell]$, a {\em segment}.  When $a_1=\cdots=a_k=0$,
  we say that $f$ is {\em piecewise constant}.
\end{defn}

Every piecewise linear function with $k$ segments can be stored in
$O(k)$ space.  If $\hat f$ is piecewise constant, then
$\hat F(i) \defeq \sum_{j \leq i}\hat f(j)$ is piecewise linear, and,
conversely, if $\hat F$ is piecewise linear and continuous, then
$\hat f \defeq \Delta \hat F$ is piecewise constant.  SafeBound
compresses degree sequences as piecewise constant, or, equivalently
compresses cumulative degree sequences as piecewise linear functions;
the conversion from one to the other is done in time $O(k)$.

Degree sequences naturally compress very well.  If $R.V$ is a key,
then its degree sequence $f(1)=f(2)=\cdots=f(N)=1$ compresses
losslessly to a single segment, $k=1$.  Even if $R.V$ is not a key,
its degree sequence can still be compressed losslessly:

\begin{lmm}\label{th:lossless:compression}
  Let $f$ be the degree sequence of a column $R.V$, and suppose $R$
  has $N$ tuples.  Then $f$ compresses losslessly to a piecewise
  constant function with $k \leq \min(\sqrt{2N},f(1))$ segments.
\end{lmm}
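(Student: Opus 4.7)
The plan is to show that the minimum number of segments in a lossless piecewise constant representation of $f$ equals the number of distinct values that appear in $f$, and then to bound that count by each of $f(1)$ and $\sqrt{2N}$ separately.

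First, I would observe that since $f$ is non-increasing, a lossless piecewise constant encoding must use a new segment exactly whenever the value drops. Hence the optimal number of segments equals $k \defeq |\{f(1), f(2), \ldots, f(d)\}|$, the number of distinct values taken by $f$. The two bounds then reduce to proving $k \leq f(1)$ and $k \leq \sqrt{2N}$.

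The bound $k \leq f(1)$ is immediate: each $f(i)$ is a positive integer (the frequency of some value, and by convention we restrict to $i$ with $f(i) \geq 1$), and all such values lie in $\{1, 2, \ldots, f(1)\}$, which has exactly $f(1)$ elements. For the bound $k \leq \sqrt{2N}$, list the distinct values of $f$ in decreasing order as $v_1 > v_2 > \cdots > v_k \geq 1$. Since they are distinct positive integers, $v_j \geq k - j + 1$. Each $v_j$ appears at least once in $f$, so
\begin{align*}
N = \sum_{i=1}^d f(i) \geq \sum_{j=1}^k v_j \geq \sum_{j=1}^k (k-j+1) = \frac{k(k+1)}{2}.
\end{align*}
Therefore $k^2 \leq k(k+1) \leq 2N$, which gives $k \leq \sqrt{2N}$, completing the argument.

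There is no real obstacle here; the only subtlety is being careful about the convention for the length of the degree sequence (whether we consider ranks $i$ with $f(i) = 0$ or truncate at $d = \|f\|_0$). Truncating at $d$ is natural because trailing zeros carry no information and can be folded into a single final segment, or simply omitted; either way the counting argument on distinct positive values is unaffected.
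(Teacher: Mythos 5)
Your proof is correct and follows essentially the same route as the paper's: both identify the number of segments with the number of distinct values of $f$, bound it by $f(1)$ since these are distinct positive integers bounded above by $f(1)$, and bound it by $\sqrt{2N}$ via the sum $v_1 + \cdots + v_k \geq k(k+1)/2 \leq N$. Your remark about truncating trailing zeros matches the paper's ``assume w.l.o.g.\ that $f(d)>0$'' step, so there is no substantive difference.
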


\begin{proof}
  We have $f(1) \geq f(2) \geq \cdots \geq f(d)$, where $d$ is the
  number of distinct values in $R.V$.  Assume w.l.o.g. that $f(d)>0$
  (otherwise, decrease $d$).  Consider its natural dividers into $k$
  segments, $0=i_0 < i_1 < \ldots < i_k\defeq d$ such that $f(j)$ is
  constant for $i_{\ell-1} < j \leq i_{\ell}$ and
  $f(i_1) >\cdots > f(i_k)>0$.  Since $f$ takes integer values, it
  follows that $f(1)=f(i_1)\geq k$.  On the other hand,
  $|R| = \sum_i f(i) \geq f(i_1)+\cdots+f(i_k)\geq k+(k-1)+\cdots+1+0
  = k(k+1)/2 \geq k^2/2$, which implies $k \leq \sqrt{2N}$.  This
  proves $k \leq \min(\sqrt{2N},f(1))$.
\end{proof}

\begin{algorithm}[t]
\caption{ValidCompress}
\label{alg:modeling}
\begin{algorithmic}[1]
\REQUIRE $f_{R.V}, c$ \hfill // $f_{R.V}=$ exact DS, $c=$ accuracy parameter
\STATE $d = |\mathbb{D}_{R.V}|$  \hfill // the number of distinct values in $R.V$
\STATE $SJ = \sum_{i=1}^d(f_{R.V}(i)^2)$ \hfill // exact DSB of selfjoin
\STATE // initialize 1st segment $(m_0,m_1]$:
\STATE $k=1; \epsilon_1=0; m_0=m_1=0$; $a_1=f_{R.V}(1); b_1=0$
\FOR{$i \in [1,\ldots, d]$} \label{alg:compress:for:loop}
\STATE $\epsilon_k = \epsilon_k + a_k^2\cdot(f_{R.V}(i)/a_k) -f_{R.V}(i)^2$
\IF{$\epsilon_k \geq c\cdot SJ$}
\STATE // DSB error too big?
\STATE $k = k+1; \epsilon_k = 0$  \hfill // start new segment $(m_{k-1},m_k]$ \label{alg:compress:new:segment}
\STATE $m_k = m_{k-1}; a_k = f_{R.V}(i); b_k = b_{k-1} + a_{k-1}(m_{k-1}-m_{k-2})$ \label{alg:compress:ak}
\ENDIF
\STATE $m_k = m_k + f_{R.V}(i)/a_k$ \hfill // extend  segment $(m_{k-1},m_k]$  \label{alg:modeling:width}
\ENDFOR  \label{alg:compress:for:loop:end}
\STATE $\hat{F}_{R.V} = \{\left(m_{l-1},m_{l}, a_l(x-m_{l-1}) + b_l\right)| l=1,k\} \cup \{(m_k, d, |R|)\}$ \label{alg:compress:return}
\RETURN $\hat f_{R.V} \defeq \Delta \hat{F}_{R.V}$ \hfill // $k+1$ segments
\end{algorithmic}
\end{algorithm}

SafeBound does not rely on the natural compression, but instead uses a
more aggressive, lossy compression, with a much smaller number of
segments than given by the lemma.  Algorithm~\ref{alg:modeling},
called~\textbf{ValidCompress}, takes as input a degree sequence
$f_{R.V}$, and an accuracy parameter $c > 0$, and computes a valid
compression using the following heuristic: if
$SJ = \sum_{i=1}^{|\mathbb{D}_V|}f_{R.V}(i)^2$ is the exact Degree
Sequence Bound of the self-join on the column $R.V$, then the
algorithm ensures that no segment increases the DSB by more than
$c \cdot SJ$.

We describe the algorithm and prove its correctness.  It iterates
through the degree sequence $f_{R.V}(i)$, $i=1,2,3,\ldots$ and builds
the segments of $\hat F_{R.V}$ one by one:
$(m_0\defeq 0, m_1], (m_1,m_2], \ldots, (m_{k-1},m_k]$.  Initially,
$k=1$, the first segment is empty $(0,0]$, and the initial slope is
$a_1=f_{R.V}(1)$.  The \texttt{for}-loop in
lines~\ref{alg:compress:for:loop}-\ref{alg:compress:for:loop:end}
iterates over each rank $f_{R.V}(i)$ and does one of two things: it
either extends the current segment $(m_{k-1},m_k]$ by increasing $m_k$
(line~\ref{alg:modeling:width}), or it increases $k$ and starts a new
empty segment (line~\ref{alg:compress:new:segment}), which is also
immediately extended in line~\ref{alg:modeling:width}.  The choice
between these actions is dictated by our heuristics: ensure that
each segment contributes at most $c\cdot SJ$ to the DSB.  We prove
that, regardless of the heuristic, the algorithm always computes
a valid compression, by checking conditions (a), (b), (c) in
Def.~\ref{def:safe:compression}.  The following invariant holds at the
beginning of each iteration of the \texttt{for}-loop
(line~\ref{alg:compress:for:loop}): if $\hat F_{R.V}$ denotes the
current piecewise linear function, defined on $(0,m_k]$, then:
\begin{align*}
\hat F_{R.V}(m_k) = & F_{R.V}(i)
\end{align*}
This follows by induction on $i$.  Before the first iteration, $i=0$,
$m_1=0$ and $\hat F_{R.V}(0)=F_{R.V}(0)=0$.  Consider the inductive
step, from $i-1$ to $i$.  On one had, the value of $F_{R.V}(i)$ grows
by $f_{R.V}(i)$; on the other hand, $m_k$ increases in
line~\ref{alg:modeling:width} and its current slope is $a_k$, hence
the value $\hat F_{R.V}(m_k)$ will grow by exactly
$a_k \cdot (f_{R.V}(i)/a_k) = f_{R.V}(i)$, proving the invariant.  In
particular, this implies that $\hat F_{R.V}(m_k)$ is always
$\leq F_{R.V}(d)=|R|$, where $d=|\mathbb{D}_V|$: it justifies adding
the a constant segment $(m_k,d,|R|)$ (line~\ref{alg:compress:return}),
and proves condition (c): cardinality is preserved.  Since the slopes
$a_k$ defined in Line~\ref{alg:compress:ak} are decreasing, condition
(a) holds: $\Delta \hat F_{R.V}$ is decreasing.  Finally, condition
(b), $\hat{F}_{R.V}(i) \geq F_{R.V}(i)$, follows from the fact that
during the \texttt{for}-loop $m_k \leq i$, since $i$ always grows by
1, while $m_k$ grows by $f_{R.V}(i)/a_k \leq 1$.  Using the invariant
and the fact that $\hat F_{R.V}$ is monotonically increasing, we
obtain $\hat{F}_{R.V}(i) \geq \hat F_{R.V}(m_k) = F_{R.V}(i)$, proving
(b).
In summary:
\begin{theorem}
  Algorithm~\ref{alg:modeling} computes a valid compression of
  $\hat f_{R.V}$ of $f_{R.V}$, with $k+1$ segments and a relative
  self-join error $\leq c\cdot k$.
\end{theorem}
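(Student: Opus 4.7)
The plan is to handle the three claims in sequence: validity, segment count, and the self-join error bound.

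The validity conditions (a)--(c) of Definition~\ref{def:safe:compression} are essentially already verified by the paragraph immediately preceding the theorem, which establishes (i) monotonic decrease of the slopes $a_\ell$, so $\hat f_{R.V} = \Delta \hat F_{R.V}$ is non-increasing, (ii) the invariant $\hat F_{R.V}(m_k) = F_{R.V}(i)$ together with $m_k \leq i$, which gives $\hat F_{R.V}(i) \geq F_{R.V}(i)$, and (iii) preservation of $|R|$ via the constant tail on $(m_k, d]$. I would factor those observations into a lemma and cite them. The segment count $k+1$ is then immediate: the \texttt{for}-loop emits $k$ linear pieces on $(0, m_k]$, and line~\ref{alg:compress:return} appends one constant tail.

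For the self-join error, the plan is to rewrite $\widehat{SJ} \defeq \sum_i \hat f_{R.V}(i)^2$ segment by segment. Let $I_\ell \subseteq [1,d]$ denote the ranks whose error increment was charged to $\epsilon_\ell$ in line~6, and let $w_\ell = m_\ell - m_{\ell-1}$. The width update in line~\ref{alg:modeling:width} enforces $a_\ell w_\ell = \sum_{i \in I_\ell} f_{R.V}(i)$, and therefore
\begin{align*}
\widehat{SJ} = \sum_{\ell=1}^{k} a_\ell^2 w_\ell = \sum_{\ell=1}^{k} a_\ell \sum_{i \in I_\ell} f_{R.V}(i) = SJ + \sum_{\ell=1}^{k} \epsilon_\ell^{\star},
\end{align*}
where $\epsilon_\ell^{\star} \defeq \sum_{i \in I_\ell}(a_\ell f_{R.V}(i) - f_{R.V}(i)^2) \geq 0$ equals the final value of $\epsilon_\ell$ at the moment segment $\ell$ is sealed. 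The threshold test in line~7 then bounds each completed $\epsilon_\ell^{\star}$ by approximately $c \cdot SJ$, so summing over the $k$ segments yields $\widehat{SJ} - SJ \leq k \cdot c \cdot SJ$, i.e.\ relative self-join error at most $c \cdot k$.

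The hard part will be the bookkeeping at the triggering iteration: line~6 accumulates the rank's error into the \emph{old} segment and only then checks the threshold, while line~10 reuses that same rank as the starting height $a_{\ell+1}$ of the new segment. I would first argue that, whichever side the rank gets accounted on, the identity $a_\ell w_\ell = \sum_{i \in I_\ell} f_{R.V}(i)$ is preserved by the matching adjustment in line~\ref{alg:modeling:width}. Second, to control the overshoot $\epsilon_\ell^{\star} - c \cdot SJ$ caused by the final increment, I would use that $f_{R.V}(i) \leq a_\ell$ on $I_\ell$ (since $f_{R.V}$ is non-increasing and $a_\ell$ was set to the first rank of the segment), so the increment $f_{R.V}(i)(a_\ell - f_{R.V}(i))$ is nonnegative and bounded by $a_\ell^2/4 \leq SJ/4$, which can be absorbed into the per-segment $c \cdot SJ$ budget. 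Once those two details are nailed down, the telescoping bound above delivers the claim cleanly.
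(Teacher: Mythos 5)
Your treatment of validity and of the segment count is correct and is essentially the paper's own argument: the paper proves conditions (a)--(c) of Definition~\ref{def:safe:compression} via the invariant $\hat F_{R.V}(m_k)=F_{R.V}(i)$, the monotonicity of the slopes $a_\ell$, and the appended constant tail, exactly as you propose to cite. For the self-join error the paper offers essentially no argument beyond the one-line remark that the heuristic caps each segment's contribution at $c\cdot SJ$, so your segment-by-segment decomposition $\widehat{SJ}-SJ=\sum_\ell\sum_{i\in I_\ell}\bigl(a_\ell f_{R.V}(i)-f_{R.V}(i)^2\bigr)$ is a genuine (and welcome) elaboration, and you correctly locate the only delicate point at the triggering iteration.

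However, your proposed patch for that delicate point does not deliver the stated bound. Absorbing the final increment $f_{R.V}(i)(a_\ell-f_{R.V}(i))\leq a_\ell^2/4\leq SJ/4$ into the per-segment budget $c\cdot SJ$ only yields $\widehat{SJ}-SJ\leq k(c+\tfrac14)SJ$, which is far weaker than $c\cdot k\cdot SJ$ when $c$ is small (the paper uses $c=0.01$). The overshoot does not need to be absorbed; it cancels exactly. When the test at line~7 fires on rank $i^\ast$, that rank's error was added to the old $\epsilon_\ell$, but its \emph{width} is charged (line~\ref{alg:modeling:width}, after line~\ref{alg:compress:ak} sets $a_{k+1}=f_{R.V}(i^\ast)$) to the new segment, where its excess is $a_{k+1}f_{R.V}(i^\ast)-f_{R.V}(i^\ast)^2=0$. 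Hence the width-charged sets $I_\ell$ and the $\epsilon$-charged sets differ by exactly the triggering ranks, your displayed chain is really an inequality, and the true excess of segment $\ell$ equals $\epsilon_\ell$ \emph{before} the final increment --- which is strictly less than $c\cdot SJ$ because the test did not fire on the preceding rank. Summing over at most $k$ segments then gives $\widehat{SJ}-SJ< c\cdot k\cdot SJ$ with no slack term. With that substitution your argument is correct; as written, the bound you would prove is not the one claimed.
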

Our algorithm is loosely inspired by approximate convex hull
algorithms such as the one used in~\cite{ferragina2020pgm}, and it is similarly linear in time and space with respect to the degree sequence length. Further, calculating a DS from a column of length $n$ requires $O(n\log(n))$ time and $O(n)$ space. In our implementation, we typically choose $c=.01$ which results in
$k = 20-30$ segments for compressing the DS of a foreign key and $<10$
segments for the DS conditioned on an element of the MCV
list. Further, if the join column is a key, then it always compresses
to a single segment.

{\bf Discussion}  We briefly justify our choice of heuristics over other possible choices. One choice would be to minimize the absolute distance between the true CDS and the approximation, $\sum_{i=1}^{|\mathbb{D}_V|}|F_{R.V}(i)-\hat{F}_{R.V}(i)|$. However, this distance would treat errors on high frequency and low frequency values as equally undesirable when the high frequency values actually have a much larger impact on the final bound. This is due to high frequency values joining with high frequency values in the worst-case instance. Alternatively, one could choose some specific weighted distance to use for modeling all columns, $\sum_{i=1}^{|\mathbb{D}_V|}w_i|F_{R.V}(i)-\hat{F}_{R.V}(i)|$. However, because that optimal weighting will depend on the adjoining tables, choosing a single weighting for all columns assumes that they will all have similarly distributed adjoining tables. For instance, this would imply that a column containing country IDs will join with the same columns as one that contains employee IDs. Our choice of the self-join error metric amounts to assuming that tables will join with similarly skewed tables. Future work may consider the skewness of adjoining tables in the database schema or a sample workload to create a more accurate metric.

\subsection{Fast Computation of the Upper Bound} \label{subsec:fdsb}
We finally turn to the online phase of SafeBound: given a query and
the collection of compressed degree sequences, use the statistics
$\hat s$ to compute the upper bound $|Q(W(\hat s))|$
(Equation~\eqref{eq:cdf:upper:bound}).  Throughout this section, we
assume that $\hat s$ are valid compressions (see
Def.~\ref{def:safe:compression}) and represented by piecewise constant
functions; equivalently, their CDS $\hat S$ are piecewise linear
functions.  We assume that all predicates have been applied to the
base tables, and $\hat s$ includes all conditional degree sequences
needed for the predicates, as discussed in
Sec.~\ref{subsec:choosing:statistics}; in other words, we will assume
w.l.o.g.  that $Q$ consists only of joins, and no predicates.
Referring to the running Example~\ref{ex:running:example}, we assume
that the query is $R'(X) \wedge S'(X,Y) \wedge T(Y)$, where the degree
sequence of $R'.X$ is
$\min(\hat F_{R.X|(A<5)}(i), \hat F_{R.X|(B=2)}(i))$ (see the
discussion at the end of Sec.~\ref{subsec:cdsb}) and the DS for $S'.X$
and $S'.Y$ are given by conditioning on the predicate
$\texttt{LIKE}\,\,'\%Abdul\%'$.

The naive computation requires materializing the worst case instance,
$W$, and is totally impractical, since $W$ is at least as large as the
database instance, regardless of how well we compress the statistics
$\hat s$.  Instead, SafeBound implements a more efficient algorithm,
adapted from~\cite{DBLP:journals/corr/abs-2201-04166}, which avoids
materializing $W$, but instead computes the bound directly, in time
that depends only on the total size of all compressed degree
sequences.

The starting observation is that $Q$ is acyclic, and can be computed
bottom-up, where at each tree node we join the current relation with
its children and project out all attributes except the unique
attribute needed by its parent.  We write this plan as an alternation
between two kinds of operations, which we call $\alpha$ and $\beta$
steps:
\begin{align}
\alpha: &&A(X) &= B_1(X) \wedge \ldots \wedge B_m(X) \label{eq:alpha} \\
\beta:  &&B(X_0) &= R(X_0, X_1,\ldots,X_k) \wedge A_1(X_1)\wedge \ldots \wedge A_k(X_k) \label{eq:beta}
\end{align}
An $\alpha$-step intersects unary relations, while a $\beta$-step is a
star-join followed by a projection on a single variable.  Recall that
all our queries have bag semantics, so this projection does not reduce
the cardinality.  The cardinality of $Q$ is the cardinality of the
last unary relation, corresponding to the root of the tree.

\begin{example} \label{ex:acyclic}
  We briefly illustrate the $\alpha,\beta$ steps for the query $Q$:
    \begin{align*}
        &R(X,Y,Z) \wedge S(Y) \wedge K(Z) \wedge T(Z,V,W)\wedge  M(V)\wedge N(V)\wedge P(W)
    \end{align*}
    We only show the attributes used in joins: e.g.  relation $S$ may
    have attributes $S(Y,A,B,C,D,\ldots)$ but we only show the join
    attribute $Y$. Fig.~\ref{fig:acyclic} shows a tree decomposition
    for $Q$, and a plan consisting of two $\alpha$ and two $\beta$
    steps.  The cardinality of the original query, $Q$, is the
    cardinality of the last unary relation, $B'$.
  \begin{figure}[h]
    \centering
    \begin{minipage}[c]{0.4\linewidth}
      \includegraphics[width=\linewidth]{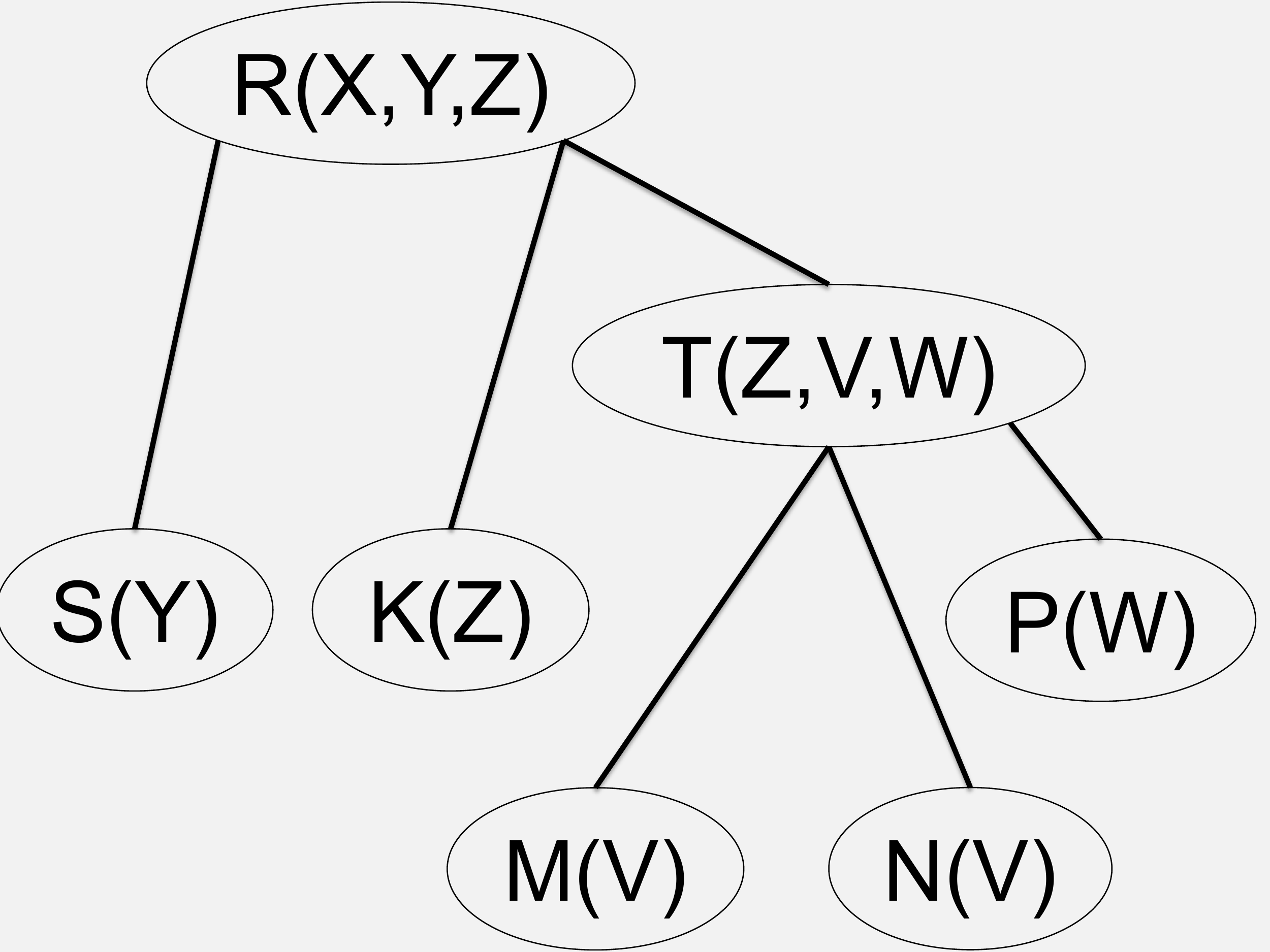}
    \end{minipage}
    \begin{minipage}[c]{0.4\linewidth} {\small
    \begin{align*}
        A(V) &= M(V)\wedge N(V)\\
        B(Z) &= T(Z,V,W)\wedge A(V)\wedge P(W)\\
        A'(Z) &= K(Z) \wedge B(Z) \\
        B'(X) &= R(X,Y,Z)\wedge S(Y)\wedge A'(Z)
    \end{align*}
  }
\end{minipage}
    \caption{A tree decomposition for $Q$ in Example~\ref{ex:acyclic}}
    \label{fig:acyclic}
  \end{figure}
\end{example}

Algorithm~\ref{alg:FDSB} evaluates $Q$ on the worst case instance
$W(\Delta \hat S)$ without materializing $W$.  Since the algorithm
uses piecewise linear CDS, the computation may lead to rounding errors
causing a slight over-approximation of the upper bound in
Eq.~\eqref{eq:cdf:upper:bound}.  For that reason, the bound computed
by the algorithm is called the {\em Functional Degree Sequence Bound},
FDSB. The key observation in the algorithm is that every unary
relation $A(X)$ or $B(X_0)$ is also piecewise constant; for that
reason we call them in the algorithm $\hat f_{A.X_0}(i)$ and
$\hat f_{B.X}(i)$ respectively.  To see this, consider an
$\alpha$-step: $A(X) = B_1(X) \wedge B_2(X) \wedge \cdots$ If each
$B_\ell$ contains the values $1,2,3,\ldots$, then so will $A$, and the
multiplicity of the value $i$ is the product of the multiplicities in
the $B_\ell$'s: this justifies line~\ref{alg:line:alpha}:
$\hat f_{A.X}(i) = \hat f_{B_1.X}(i) * \hat f_{B_2.X}(i) * \cdots$ The
product of piecewise constant functions is still piecewise constant,
with a number of segments equal to the sum of all segments of the
summands.  Consider now a $\beta$-step, Eq.~\eqref{eq:beta}. The
multiplicity of $i$ in the output $B$ is the product
$\hat f_{R.X_0}(i) * \hat f_{A_1.X_1}(i_1) * \hat f_{A_2.X_2}(i_2) *
\cdots$, where the ranks $i_1, i_2, \ldots$ need to be looked up using
the expression $i_\ell = \hat{F}_{R.X_\ell}^{-1}(\hat{F}_{R.X_0}(i))$;
this justifies line~\ref{alg:line:invert}.  Here, too, we observe that
$i_\ell$ is given by a piecewise linear function in $i$, while
$\hat f_{A_\ell.X_\ell}(i_\ell)$ is piecewise constant.  In our
implementation, both lines~\ref{alg:line:alpha}
and~\ref{alg:line:invert} of the algorithm compute a {\em
  representation} of the resulting piecewise constant functions. This
representation consists of three vectors; slopes, intercepts, and
right interval edges. Because of this representation, multiplication
of functions only requires a single pass over these arrays while
computing the inverse term,
$\hat{F}_{R.X_\ell}^{-1}(\hat{F}_{R.X_0}(i))$, involves performing a
binary search for each of the interior function's segments. Finally, the algorithm returns the cardinality of the last unary relation at the root.  The following theorem from~\cite{DBLP:journals/corr/abs-2201-04166} also applies to our setting:

\begin{algorithm}[t]
  \caption{Algorithm for $\text{FDSB}(Q,\hat S)$.}
\label{alg:FDSB}
\begin{algorithmic}[1]
\REQUIRE A query plan for $Q$ consisting of $\alpha,\beta$ steps.
\FOR{\mbox{each step}}{
    \IF{$\alpha$-step}{
        \STATE // $A(X) = B_1(X)\wedge \ldots \wedge B_m(X)$
        \STATE $\hat{f}_{A.X}(i) =\prod_{\ell=1}^m\hat{f}_{B_\ell.X}(i)$  \label{alg:line:alpha}
        \STATE // Note: $\hat{f}_{A.X}$ is a piecewise constant function
    }\ENDIF
    \IF{$\beta$-step}{
        \STATE // $B(X_0)=R(X_0, X_1,\ldots,X_k)\wedge A_1(X_1)\wedge \ldots\wedge A_k(X_k)$
        \STATE $\hat{f}_{B.X_0}(i) =  \hat{f}_{R.X_0}(i)\prod_{\ell=1}^k\hat{f}_{A_\ell.X_\ell}\left(\hat{F}_{R.X_\ell}^{-1}(\hat{F}_{R.X_0}(i))\right)$ \label{alg:line:invert}
        \STATE // Note: $\hat{f}_{B.X_0}$ is a piecewise constant function
    }\ENDIF
 }
\ENDFOR
\RETURN  $\sum_i \hat{f}_{B_{\text{root}}.X}(i)$
\label{alg:FDSB:return}
\end{algorithmic}
\end{algorithm}

\begin{thm}\label{th:fdsb}
  Let $\hat S$ consists of piecewise linear CDS, and let $K$ be the
  total number of segments occurring in all CDS.  Then,
\begin{enumerate}
\item \label{item:fdsb:1}
  $|Q(W(\Delta \hat S))| \leq \text{FDSB}(Q, \hat S)$.
\item \label{item:fdsb:2} $\text{FDSB}(Q,\hat S)$ can be computed in
  time $O(K\log(K))$.
\end{enumerate}
\end{thm}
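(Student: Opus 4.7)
The plan is to prove the two parts by induction on the structure of the query plan, which by construction is a tree of $\alpha$ and $\beta$ steps with leaves corresponding to the input compressed CDSs in $\hat S$ and root $B_{\text{root}}$.

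For part~\ref{item:fdsb:1}, the goal is to establish the invariant that at every step the piecewise constant function $\hat{f}_{A.X}$ produced by the algorithm, when summed over integer ranks, is at least as large as the true multiplicity function of $A.X$ in the intermediate relation evaluated on the worst-case instance $W(\Delta \hat S)$. The base case is immediate since leaf DSs are exactly $\Delta \hat S$. For an $\alpha$-step, by the definition of bag intersection the true multiplicity of value $i$ in $A(X) = B_1(X) \wedge \cdots \wedge B_m(X)$ equals $\prod_\ell \hat{f}_{B_\ell.X}(i)$, so line~\ref{alg:line:alpha} preserves the invariant without any over-approximation. The central argument is for $\beta$-steps: by the alignment property of the worst-case instance from~\cite{DBLP:journals/corr/abs-2201-04166}, the rank-$i$ value in $R.X_0$ joins with the rank-$j_\ell$ value in $R.X_\ell$ determined by $\hat F_{R.X_\ell}(j_\ell) = \hat F_{R.X_0}(i)$, i.e. $j_\ell = \hat{F}_{R.X_\ell}^{-1}(\hat{F}_{R.X_0}(i))$. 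Because line~\ref{alg:line:invert} uses the continuous piecewise linear $\hat F$'s, the computed $j_\ell$ may be fractional; since $\hat{f}_{A_\ell.X_\ell}$ is non-increasing (by condition~(a) of Def.~\ref{def:safe:compression}), its value at the real $j_\ell$ is at least its value at $\ceil{j_\ell}$, the integer rank a strictly discrete evaluation would use. Hence the per-value multiplicity computed by the algorithm dominates the integer one, and summing over integer $i$ at the root gives $|Q(W(\Delta \hat S))| \leq \text{FDSB}(Q,\hat S)$.

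For part~\ref{item:fdsb:2}, I would analyze the segment count of each intermediate function. An $\alpha$-step that multiplies functions with $K_1,\ldots,K_m$ segments produces a piecewise constant function with at most $\sum_\ell K_\ell$ segments in time $O(\sum_\ell K_\ell)$ via a synchronous sweep over segment boundaries. A $\beta$-step first evaluates $\hat{F}_{R.X_\ell}^{-1}(\hat{F}_{R.X_0}(i))$ at the segment endpoints of $\hat{F}_{R.X_0}$; because each $\hat{F}$ is monotone piecewise linear stored as the sorted triple of slopes, intercepts and right endpoints, each inversion is a binary search of cost $O(\log K)$, and the composed function has a number of segments bounded by the sum of the constituent segment counts. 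A charging argument then bounds the total segment count summed across all intermediate functions by $O(K)$: each input segment is combined into the functions along its unique root-ward path in the plan tree, and the acyclic plan structure prevents multiplicative blow-up. Combining this with the $O(\log K)$ factor for the inversions yields the overall $O(K\log K)$ bound.

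The main obstacle will be the fractional-rank analysis in part~(1). The delicate step is to justify rigorously that evaluating the non-increasing $\hat{f}_{A_\ell.X_\ell}$ at the continuous $\hat{F}_{R.X_\ell}^{-1}(\hat{F}_{R.X_0}(i))$ never underestimates what a strictly discrete worst-case computation would produce; this relies on the interplay between the monotonicity of $\hat{f}$ (non-increasing) and of $\hat{F}^{-1}$ (non-decreasing) and on how the continuous domain of $\hat F$ sits above the discrete domain of $F$. A secondary subtlety in part~(2) is the charging argument for segment counts, which must carefully account for repeated appearances of the same underlying segment across the $\alpha$/$\beta$ alternation; absent such care, one can easily obtain a spurious extra factor in the depth of the tree.
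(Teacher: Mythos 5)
First, for context: the paper does not actually prove Theorem~\ref{th:fdsb} --- it imports it from~\cite{DBLP:journals/corr/abs-2201-04166} --- and the only in-paper justification is the informal discussion surrounding Algorithm~\ref{alg:FDSB}, which sketches exactly the inductive structure you propose (closure of piecewise-constant functions under the $\alpha$/$\beta$ operations, segment counting, binary search for the inversions). Your part~(\ref{item:fdsb:2}) matches that sketch and is essentially sound: along a tree-structured plan each input segment is consumed additively by the products and compositions, and the binary searches contribute the $\log K$ factor.

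The genuine gap is in part~(\ref{item:fdsb:1}), at the $\beta$-step you yourself flag as delicate, and it is not merely a matter of rigor --- the inequality you invoke points the wrong way. In $W(\Delta\hat S)$, the block of tuples of $R$ carrying the rank-$i$ value of $X_0$ occupies positions $p\in(\hat F_{R.X_0}(i-1),\hat F_{R.X_0}(i)]$, and these tuples generally carry \emph{several} distinct values of $X_\ell$; the true multiplicity of the rank-$i$ output value is a \emph{sum over positions $p$ in the block} of $\prod_\ell \hat f_{A_\ell}$ evaluated at position-dependent ranks, not a single evaluation times the block length. Your argument compares the continuous evaluation at $j_\ell=\hat F^{-1}_{R.X_\ell}(\hat F_{R.X_0}(i))$ with the discrete evaluation at $\ceil{j_\ell}$, but $\hat F_{R.X_0}(i)$ is the \emph{right} endpoint of the block, where the ranks are largest and hence, by the very monotonicity of $\hat f_{A_\ell}$ you rely on, the per-tuple contribution is \emph{smallest}; a term-by-term comparison at that point yields a lower bound on the block sum, not an upper bound. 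Closing the induction requires a different comparison: reading line~\ref{alg:line:invert} and the final summation as operations on functions of a continuous rank variable, the change of variables $p=\hat F_{R.X_0}(i)$ turns the algorithm's output into an integral of a non-increasing integrand over $(0,|R|]$, which dominates the discrete sum because each unit interval $[p-1,p]$ contributes at least the integrand's value at its right endpoint (equivalently, one must account for the whole block rather than its right endpoint). A related slip: your $\alpha$-step claim of ``no over-approximation'' should be stated as pointwise domination under the inductive hypothesis, since the $\hat f_{B_\ell.X}$ are themselves already over-approximations of the multiplicities on $W(\Delta\hat S)$.
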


The first item asserts that FDSB is a correct upper bound.  The second
item shows that we can compute it in log-linear time relative to the
size of the compressed representations. We
usually have $10-30$ segments per degree sequence, thus a query with 5
joins has at most $K=300$. This results in very fast inference as
shown in Section~\ref{sec:evaluation}.

\subsection{Discussion: More Complex Join Graphs}
\label{subsec:cyclic-multi-key}
\noindent\textbf{Cyclic Queries:} If $Q$ is a cyclic query, then we compute its upper bound as the minimum of the DSBs of all its spanning trees (which are acyclic).  For example, if $Q$ is $R(X,Y)S(Y,Z)T(Z,X)$, there are three spanning trees $R(X,Y)S(Y,Z)$, and $R(X,Y)T(Z,X)$, and $S(Y,Z)T(Z,X)$, and we return the minimum of their DSB.

\noindent\textbf{Multi-Column Joins:} We provide two methods of handling multi-column joins. (1) Treat multiple columns as a single column at construction time and calculate its CDS just like for a single column. (2) Alternatively, we note that the CDS for any single column is an upper bound on the CDS of a set of columns.  Therefore, we can always upper bound the CDS of the set of columns by taking the minimum of the CDS of each of its columns. 

\noindent\textbf{Undeclared Join Columns:} We provide a fallback mechanism to handle queries with joins on columns that are not in the declared join column set. During offline construction, we keep a compressed CDS for every column in the relation without conditioning on any predicates. This incurs minimal overhead as it only calculates one CDS per column. At query-time, we calculate a CDS for any declared join column and use it to derive an upper bound on the filtered relation's cardinality. We then truncate the undeclared join column's CDS to match this single-table cardinality bound and use this to approximate the filtered CDS of the undeclared join column. This allows us to adapt the bound to the lower cardinality induced by any predicates on the relation. However, the resulting CDS is likely overly skewed because it assumes that the predicate retains precisely the high degree values.

\section{Optimizations}
\label{sec:optimizations}
Here, we present optimizations that reduce the space consumption and increase the accuracy of the vanilla design.

\subsection{Compressing a Set of CDS}\label{subsec:grouping:CDS:sets}
To reduce the memory overhead of storing a CDS set for every bin, value, and N-gram, we go beyond compressing a single CDS and consider compressing groups of CDS sets together. Specifically, we divide the CDS sets into groups of "similar" functions and replace each group with the point-wise maximum of those sets.

To motivate this, we first present a breakdown of the memory cost of SafeBound's statistics. Let $b$ be the granularity, e.g. the number of buckets in the histogram, and let $k$ be the number of segments in each CDS. Further, let $|V_J|$ and $|V_F|$ be the number of join and filter columns, respectively. The memory footprint without performing any grouping compression is $O(b|V_F| + bk|V_F||V_j|)$ where the first term is the bucket bounds or values in the MCV list while the second term is the cost of storing the CDS sets. Now, consider dividing the CDS sets into $M$ groups and storing just the maximum over each group. The memory footprint then becomes $O(b|V_F| + Mk|V_F||V_j|)$ which allows us to decouple the granularity of our statistics, $b$, from the accuracy of our approximations, $Mk$. This is crucial for workloads which feature highly selective predicates because it allow us to keep more fine-grained histogram buckets, MCV lists, and N-grams. 

As an example, consider a range predicate $.1\leq R.A<.2$. We may only have the memory to store buckets of width $1$ if we store every bucket's CDS exactly. However, if we cluster and compress our CDS sets with an average cluster size of $10$, we may able to have buckets of width $.1$ which encapsulate the query much tighter while only incurring a $40\%$ relative approximation error. The relative approximation error in this case is far outweighed by the improved granularity of our statistics.

\noindent\textbf{Choosing a Distance Metric:} The first step to clustering is choosing a distance metric for the problem. The perfect distance metric for this problem is the average error incurred on the workload when the two functions are replaced with their maximum. However, we don't have access to the workload when clustering, so we instead use the same assumption that we used in Sec. \ref{subsec:modeling:cdfs}, that the workload consists of self-joins. Therefore, our distance metric becomes the self-join error, i.e.
\begin{align*}
    d(F_1,F_2)=\frac{\sum_{i=1}^{|\mathbb{D}_V|}\Delta\max(F_1(i),F_2(i))^2}{\sum_{i=1}^{|\mathbb{D}_V|}f_1(i)^2} +\frac{\sum_{i=1}^{|\mathbb{D}_V|}\Delta\max(F_1(i),F_2(i)^2}{\sum_{i=1}^{|\mathbb{D}_V|}f_2(i)^2}
\end{align*}.

\noindent\textbf{Choosing a Clustering Algorithm:}
Given this distance metric, we need to choose a clustering algorithm, and we choose \textit{complete-linkage clustering} ~\cite{mullner2013fastcluster}. This method of hierarchical clustering defines the distance between clusters as the maximum distance between points in each cluster. As opposed to other clustering methods such as single-linkage clustering, it produces tighter clusters and avoids long "chain" clusters which contain highly dissimilar points. This results in clusters of functions which are well approximated by their point-wise maximums.

\subsection{Pre-Computing Primary Key Joins}
\label{subsec:primary:key}
\textbf{Predicates Induce Cross-Join Correlation:} As described in Section \ref{subsec:dsb}, the FDSB makes worst-case assumptions about the correlation of columns in joining tables. This assumption is fundamental to computing an upper bound. However, particularly in the presence of predicates, these assumptions may not hold, causing SafeBound to overestimate the query size. 

For example, consider the tables \texttt{MovieKeywords} and \texttt{Keywords} from the JOB Benchmark. The former is a fact table with two foreign key columns, \texttt{MovieId} and \texttt{KeywordId}, that associate movies with keywords. The latter is a much smaller dimension table with a primary key column \texttt{KeywordId} and a filter column \texttt{Keyword}, which provides human-readable descriptions of these keywords, e.g. 'character-name-in-title' or 'pg-13'. A natural query would join them with an equality predicate on the \texttt{Keyword} column to find movies with a particular keyword. A naive version of SafeBound would assume that the selected keyword corresponds to the most frequent value of \texttt{KeywordId} in the \texttt{MovieKeywords} table. If the queried keyword actually occurs infrequently in \texttt{MovieKeywords}, this could introduce a massive error in the final estimation. 

\textbf{Handling Predicate-Induced Correlation:} To avoid this issue, SafeBound pre-computes PK-FK joins and stores statistics about the filter columns of the PK relations. In our example, this would mean joining \texttt{MovieKeywords} and \texttt{Keywords} then generating statistics on the resulting \texttt{keyword} column in \texttt{MovieKeywords}. When an equality predicate is applied to the \texttt{keyword} column on the \texttt{Keywords} table, SafeBound applies this predicate to the \texttt{MovieKeywords} table as well, allowing it to directly estimate the CDS set given the predicate without resorting to worst-case assumptions.

Fortunately, the PK-FK join size is bounded by the size of the FK table, so this pre-computation is tractable. While this does not capture all correlations, it does enable accurate estimation for the ubiquitous fact/dimension table design where predicates are applied to dimension tables then propagated to fact tables via PK-FK joins.

\subsection{Bloom Filters}
\label{subsec:bloom:filters}
An important source of overhead in SafeBound's data structures are the most common values lists (MCV lists) that it keeps for handling equality predicates. Because values can have an unbounded size, storing a naive MCV list can result in significant memory and lookup overhead. To avoid this overhead, we instead represent our MCV lists as a set of \textit{Bloom filters}. A Bloom filter is an approximate data structure, which answers the question "is $x$ an element of the set $S$?" while allowing some false positives and no false negatives. In exchange for approximation, Bloom filters provide a compressed memory footprint ($\approx 12$ bits/value) and fast, constant lookup\footnote{There are many variants on Bloom filters which would work equally well here, e.g. Cuckoo filters, quotient filters, and XOR filters.}.

Because Bloom filters only return a positive/negative and SafeBound needs to connect values to their CDS group, we can't represent the whole MCV list in one filter. Instead, we allocate a filter for each CDS group and insert all values whose CDSs are in that group into its filter. At query time, SafeBound then checks for membership in every group's filter and takes the maximum over all CDS sets whose filter return positive.

\balance

\begin{figure*}[t]
    \centering
    \subcaptionbox{Total workload runtimes relative to runtimes achieved with perfect cardinality estimates. SafeBound results in nearly optimal overall runtimes.
    \label{subfig:runtime-graphs}}[.3\textwidth]{
    \includegraphics[height=120pt, width=165pt]{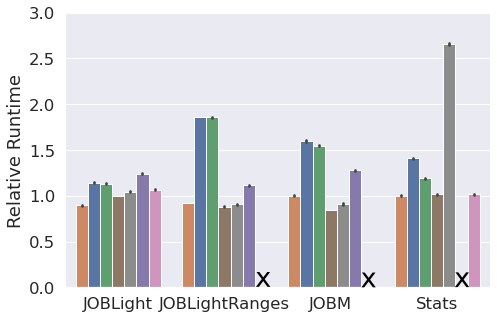}
    }
    \hfill
    \subcaptionbox{SafeBound achieves \textbf{3x-500x} faster median planning time than PessEst and ML-based methods across all benchmarks.
    \label{fig:planning-time}}[.3\textwidth]{
    \includegraphics[height=120pt, width=165pt]{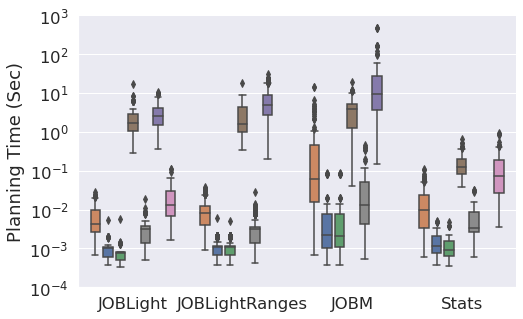}
    }
    \hfill
    \subcaptionbox{Relative errors for cardinality estimates. SafeBound's bounds have similar errors to traditional estimates while never underestimating cardinalities.
    \label{subfig:relative-error}}[.36\textwidth]{
    \includegraphics[height=116pt, width=220pt]{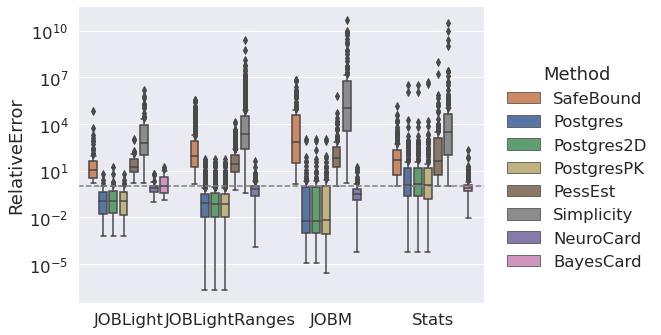}
    }
    \caption{Workload runtimes, planning time, and estimation error}
    \label{fig:runtime-graphs}
\end{figure*}

\section{Evaluation} \label{sec:evaluation}


In this section we present an empirical evaluation of SafeBound.  We
addressed the following questions.  How well does SafeBound perform in
end-to-end workloads (Sec.~\ref{subsec:end:to:end})? How does its
memory footprint and inference time compare to existing methods
(Sec.~\ref{subsec:cost})?  How does SafeBound affect DBMS robustness, e.g. performance
regressions when new indices are added (Sec.~\ref{subsec:regression})?
We also conducted several micro-benchmarks in
Sec.~\ref{subsec:micro:benchmarks}, and explored how SafeBound scaled in Sec. \ref{subsec:scalability}.

\textbf{Metrics} We used the following metrics in our evaluation.  (1)
Plan Quality (Workload Runtime): Following recent work on benchmarking
cardinality
estimators~\cite{DBLP:journals/pvldb/HanWWZYTZCQPQZL21}
we measure the end-to-end runtime of a query workload in Postgres
where we injected alternate cardinality estimators into the optimizer.
We run each workload and method five times from a cold cache and
present the average relative to the baseline of inserting the true
cardinality estimates.  (2) Memory Footprint: We compare the size of
the stored statistics file on disk, and for Postgres we calculate the
size of the \texttt{pg\_statistic} and
\texttt{pg\_statistic\_extended} catalog tables. We do not report
memory statistics for PessEst as it does not pre-compute statistics.
(3) Planning Time: We further consider the planning time for each
method. This includes the inference time required to get estimates for
every sub-query as well as Postgres' optimization time given injected
estimates. (4) Relative Error: Lastly, we present the relative error
of each method as
$\text{Error} = (\text{Estimate}/\text{True Cardinality})$. We prefer
this metric to $q$-error as it retains information about whether a
method overestimates or underestimates.

\textbf{Datasets} We use two datasets, \textit{IMDB} and
\textit{Stats}. For \textit{IMDB} we consider three different query
workloads from previous
work~\cite{DBLP:journals/pvldb/YangKLLDCS20,DBLP:conf/cidr/KipfKRLBK19,DBLP:journals/pvldb/HilprechtSKMKB20}\footnote{We do not use the original JOB benchmark because it contains negation predicates which are not supported by SafeBound, NeuroCard, and BayesCard.}:
\textit{JOB-Light} consists of $70$ queries on a subset of $6$ tables
in IMDB with $2-5$ PK-FK joins and $1-4$ filter
predicates on numeric columns.  \textit{JOB-LightRanges} operates on
the same table subset as JOB-Light, but it has $1000$ queries,
includes additional columns, and predicates over
string columns.  And \textit{JOB-M} is a modified version of the
original JOB benchmark; it is the most complex benchmark considered,
with 113 queries over 16 tables, and includes significantly more
complicated expressions such as \texttt{IN} and \texttt{LIKE}
predicates.  The \textit{Stats} dataset is built over a Statistics
StackOverflow, and consists of a workload with $146$ queries spanning
$8$ tables. While restricted to numeric columns, it has $2-16$
predicates and joins $2-8$ tables per query making it is considered to
be a challenging benchmark for cardinality
estimation~\cite{DBLP:journals/pvldb/HanWWZYTZCQPQZL21}. It has a complicated schema with cyclic primary key/foreign
key relationships.

\textbf{Compared Systems} We compared SafeBound against the following
systems.  (1) Postgres: As a baseline, we compare against the built-in
cardinality estimator for Postgres v13. This system uses System-R
style estimation combined with years of tuning and carefully chosen
magic constants. It stores 1D histograms, most common value lists, and
distinct counts for each attribute in a relation.  (2) Postgres 2D:
We make use of Postgres' extended statistics, which allows the
user to keep statistics on pairs of columns. We instruct the system to
store statistics for every pair of filter columns.
(3) Postgres PK: SafeBound precomputes statistics on key,
  foreign-key joins, and so do BayesCard, and NeuroCard; PessEst
  computes PK-FK joins at query time when
  needed~\cite[Sec.3.3]{DBLP:conf/sigmod/CaiBS19}.  To understand the
  effect of these computations, we measured how much such
  precomputations could help Postgres.  We pre-computed and
  materialized the PK-FK joins, replaced the FK tables with this join
  (extending them with additional columns from the PK tables), and
  computed statistics on these tables.  We also adjusted the queries
  accordingly.  For example, consider the query
  $Q(X, A, B) = R(X,A)S(\underline{A}, B, Y)T(\underline{B}, Z) \wedge
  S.Y<10 \wedge T.Z > 5$ where $S.A$, $T.B$ are PKs.  We calculate the
  PK-FK join results $R'(X, A, Y')$ and $S'(A, B, Y, Z')$ and adjust
  the query to
  $Q'(X, A, B) = R'(X, A, Y') S'(\underline{A}, B, Y, Z')
  T(\underline{B}, Z) \wedge R'.Y'<10 \wedge S'.Z' > 5$.  We call this
  modifed system PostgresPK.  Notice that this mirrors our method by
  propagating statistics across PK-FK joins, without modifying the
  query's join graph.

(4) BayesCard: This is an ML method that uses ensembles of Bayesian Networks trained 
on subsets of the join schema to produce cardinality estimates~\cite{wu2020bayescard}.
Recent work has shown that it matches previous ML methods in accuracy while being
faster and more compact~\cite{DBLP:journals/pvldb/HanWWZYTZCQPQZL21}. (5) NeuroCard:
this is an ML method that builds an autoregressive model over a sample
of the full outer join of the schema~\cite{yang2020neurocard}. (6)
PessEst: The main prior work on cardinality
bounding~\cite{DBLP:conf/sigmod/CaiBS19}. It refines a subset of the
Polymatroid Bound using a hash partitioning scheme; we use $4096$ hash
partitions. However, this method requires scans of the base table to
estimate queries with predicates.
(7) Simplicity: a cardinality estimator which uses single-table
  cardinalities and max degrees of join
  columns~\cite{DBLP:conf/cidr/HertzschuchHHL21}. In order to improve
  the max degree in the presence of predicates, Simplicity relies on
  samples~\cite{DBLP:conf/cidr/HertzschuchHHL21} or on estimates
  derived from Postgres, which are no longer leading to {\em
    guaranteed} upper bounds.  In the original prototype and our
  implementation, the single-table estimates are derived from Postgres
  although more complicated sampling mechanisms are proposed in the
  paper. Similarly, we do not consider their greedy join ordering
  algorithm, instead focusing solely on the cardinality estimator.

\textbf{Experimental Setup} In our experiments, we use an instance of
Postgres v13 and input cardinality estimates using the
\texttt{pg\_hint\_plan} extension. We adjust the default settings of
Postgres per the recommendation of
~\cite{DBLP:journals/vldb/LeisRGMBKN18} setting the shared memory to
4GB, worker memory to 2GB, implicit OS cache size to 32 GB, and max
parallel workers to 6. Additionally, we enable indices on primary and
foreign keys. We run all experiments on an AWS EC2
instance (m5.8xlarge) with 32 vCPUs and 128 GB of memory.

\begin{figure*}[t]
    \centering
    \includegraphics[width=1\textwidth, height=120pt]{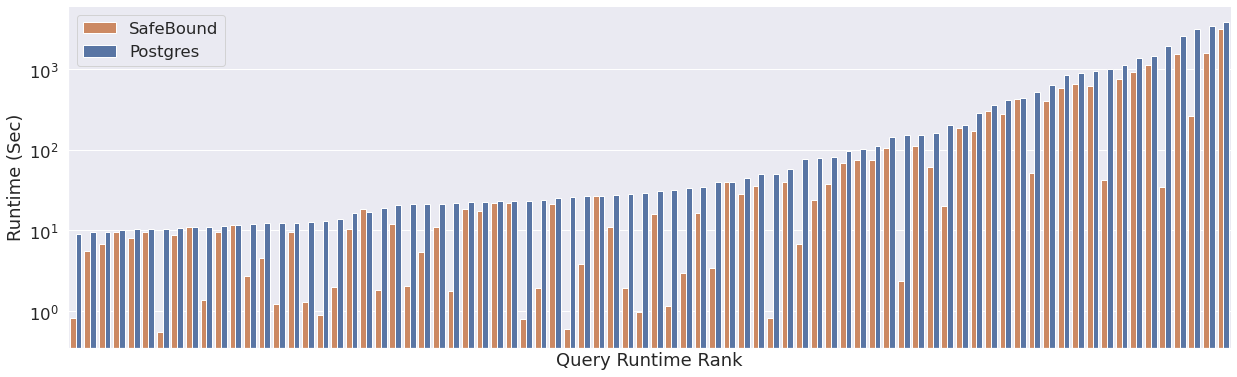}
    \caption{The runtime of the 80 longest-running queries across all benchmarks. Estimates from SafeBound result in faster query execution on the most expensive queries with a p05/p25/p50/p75/p95 quantile speedup of 1.01x/1.3x/1.7x/10.1x/30.3x.}
    \label{fig:Top-80-Runtime}
\end{figure*}
\begin{figure}[t]
    \includegraphics[width=.46\textwidth, height=125pt]{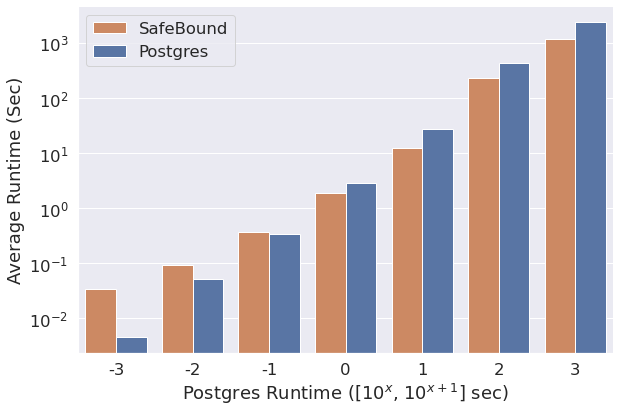}
    \caption{The average runtime of queries binned by their runtime using Postgres' estimates. SafeBound outperforms Postgres' estimates for queries with runtime over one second.}
    \label{fig:Join-Size-Vs-Runtime}
\end{figure}

\subsection{End-to-End Performance}
\label{subsec:end:to:end}
\noindent\textbf{Nearly optimal workload runtimes across all
  benchmarks.} We show the workload runtimes across a variety of
benchmarks and cardinality estimation methods in Figure
\ref{subfig:runtime-graphs}. Across all four benchmarks, plans
generated using SafeBound's estimates achieve workload runtimes
equivalent to those generated with the true cardinalities. As found in
previous work, using true cardinality does not always lead to optimal
plans due to imperfect cost modeling
\cite{DBLP:journals/pvldb/HanWWZYTZCQPQZL21}. SafeBound achieves
$20\%-85\%$ lower runtimes than Postgres on all
benchmarks. BayesCard and PessEst perform similarly to SafeBound on
all benchmarks while NeuroCard has $20-30\%$ worse performance on both
JOBLight and JOBM. Bayescard does not support the string predicates of JOB-LightRanges or
JOB-M, and NeuroCard does not support the cyclic schema of the Stats benchmark. All pessimistic systems achieve good performance on the JOB benchmarks, pointing to the utility of even fairly loose cardinality bounds. However, Simplicity results in a poor join ordering for query 132 of the Stats benchmark, resulting in a 1500x slowdown.


\noindent\textbf{Efficient plans for the queries that matter.} To provide context for SafeBound's performance, we examine the runtime of the longest running queries in Figure \ref{fig:Top-80-Runtime}. The queries that make up the bulk of the runtime can often be sped up significantly (up to \textbf{$60x$}) by using SafeBound instead of Postgres for cardinality estimates. 

Figure \ref{fig:Join-Size-Vs-Runtime} buckets the queries across all workloads by runtime and shows the average runtime using SafeBound and Postgres's estimates. Here, we can see SafeBound generally achieves a significant speedup for queries that take over a second. We see these speedups because cardinality bounds encourage the query optimizer to make conservative decisions (e.g. choosing hash joins over nested loop joins) which tend to be the correct decisions for queries with long runtimes or large outputs. For the fastest queries, SafeBound often results in slower execution as it discourages the optimizer from choosing high-risk high-reward plans.

\noindent\textbf{Estimation Errors.}
In Figure~\ref{subfig:relative-error}, we show the relative estimation
error on full queries for each of the benchmarks and
methods. SafeBound has a similar range of errors as Postgres, but
guarantees that it never underestimates: its estimates in the figure
always lie above the center line. Traditional estimators frequently
underestimate by $10^3$ or more which is detrimental to query
optimization. Notably, additional optimisations such as
  Postgres2D and PostgresPK do not significantly alter the
  estimates. This implies that the errors primarily stem from the
  fundamental independence assumptions rather than a lack of
  statistics.  ML-based methods produce accurate estimates, but still
lack guarantees; NeuroCard is prone to significant
underestimates. The Simplicity system overestimates
  significantly due to its reliance on the max degree without
  conditioning on predicates.  Moreover, as discussed, its ``upper
  bound'' is not guaranteed: it returns a wrong upper bound on two of
  the queries of JOBLightRanges. Because it handles predicates by scanning the table at estimation time, PessEst has good estimates for queries with many predicates or with challenging predicates such as JOB-LightRanges and JOB-M. We provide a more detailed breakdown of estimation error by number of tables in the appendix.

\begin{figure*}[t]
    \centering
    \begin{subfigure}[b]{.4\textwidth}
    \includegraphics[width=.9\textwidth, height=110pt]{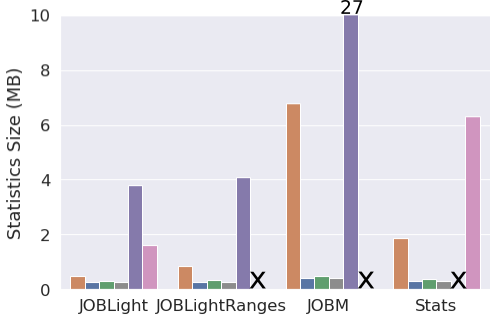}
    \caption{Across all benchmarks, SafeBound uses \textbf{3x-6.8x} less space than ML methods.}
    \label{fig:stat-size}
    \end{subfigure}
    \hfill
    \begin{subfigure}[b]{.51\textwidth}
    \includegraphics[width=.9\textwidth, height=110pt]{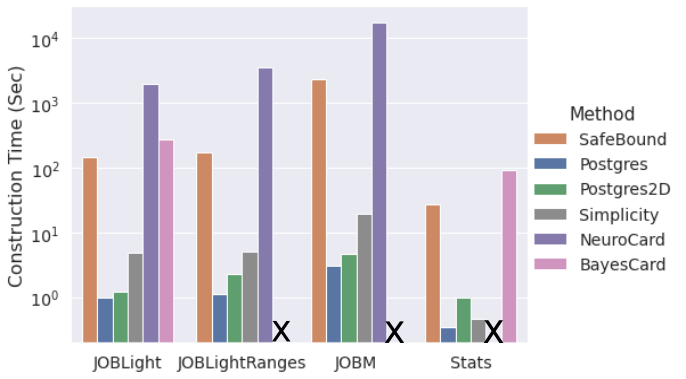}
    \caption{SafeBound achieves up to 17x lower offline statistics construction time than ML methods.}
    \label{fig:build-time}
    \end{subfigure}
    \caption{Statistic Size and Construction Time}
\end{figure*}

\begin{figure*}[t]
    \centering
    \begin{subfigure}[b]{.36\textwidth}
    \includegraphics[width=\textwidth, height=120pt]{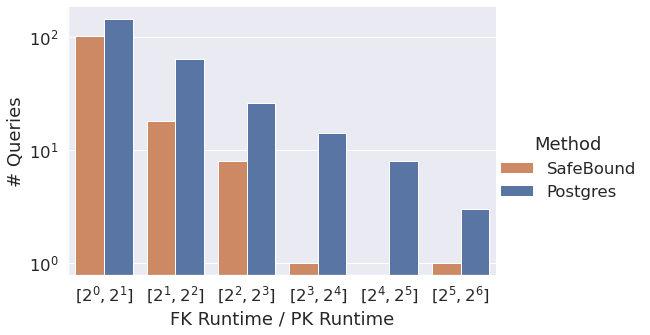}
    \caption{The frequency and magnitude of FK index performance regressions. SafeBounds estimates result in fewer and less severe regressions.}
    \label{fig:regressions}
    \end{subfigure}
    \hfill
    \begin{subfigure}[b]{.31\textwidth}
    \includegraphics[width=\textwidth, height=120pt]{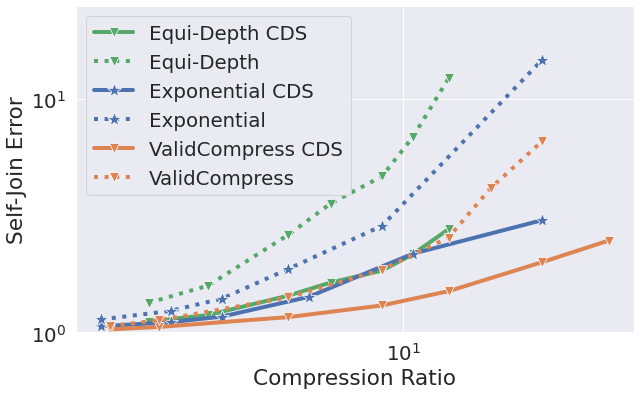}
    \caption{Modeling the CDS results in lower error than modeling the DS, and our two-pass algorithm outperforms heuristic methods.}
    \label{fig:segmentation-microbenchmark}
    \end{subfigure}
    \hfill
    \begin{subfigure}[b]{.31\textwidth}
    \includegraphics[width=\textwidth, height=120pt]{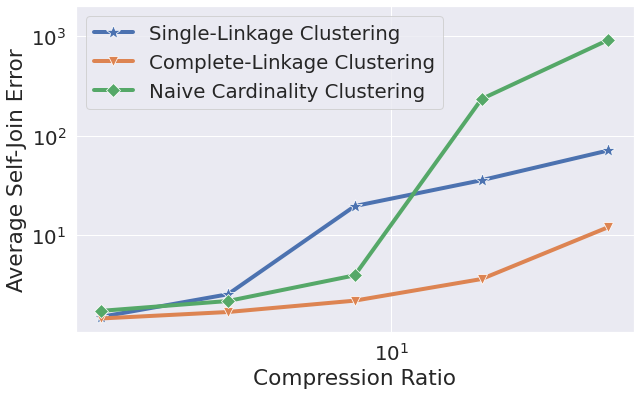}
    \caption{Complete-linkage clustering of CDS produces low error even at high compression rates.}
    \label{fig:clustering-microbenchmark}
    \end{subfigure}
    \caption{Micro-Benchmarks and Performance Regression Study}
\end{figure*}

\subsection{Planning Time, Memory, and Build Time}
\label{subsec:cost}

Figure~\ref{fig:planning-time} reports the planning times (Postgres'
optimization and the cost estimation time for all sub-queries) for all systems and
benchmarks. Postgres' efficient C implementation and use of dynamic programming in estimating sub-queries results in the fastest planning time. The Simplicity system achieves good planning times thanks to its straightforward bound calculation and reliance on Postgres's fast single-table estimates. PessEst requires scanning the base tables at
runtime when predicates are applied, resulting in 12x-420x slower inference. The ML methods, particularly Neurocard, perform inference on complex black-box models which leads to poor inference latency.
%
%
SafeBound implements Algorithm~\ref{alg:FDSB:return}, which runs in
log-linear time in the size of the compressed degree sequences.
This results in much faster planning times than PessEst, BayesCard,
and NeuroCard across all benchmarks. 

Next, we turn to the memory footprint, which we report in Figure
\ref{fig:stat-size}.  Safebound's simple statistics and compression techniques allow 
it to achieve a compact memory footprint close to traditional methods like
Postgres. For instance, group compression results in 7.3-43x fewer degree sequences being stored across benchmarks. This results in statistics that are only 200KB larger than Postgres' for the JOB-Light benchmark and over 3x smaller for all benchmarks than
BayesCard and NeuroCard which rely on complex black box models. Simplicity relies on the statistics stored in Postgres and the max degree of each join column, resulting in a small memory footprint. PessEst does not operate on pre-computed statistics, so its space and
build time is not reported.

Finally, we considered the build time. SafeBound has 2-3.5x and 8-20x faster build
time than BayesCard and NeuroCard, respectively.  We note that SafeBound's build process
(Algorithm~\ref{alg:modeling}) is a series of aggregations over the
base data to create histograms and MCV lists, resulting in relatively fast
construction despite building from the full dataset rather than a
sample. Traditional estimators remain the fastest as they
perform similar aggregations to SafeBound but on a small sample of the data.
The string predicates of JOB-M result in longer build times for SafeBound and
NeuroCard because they require the calculation of tri-grams and
factorized columns, respectively. Postgres(2D), on the
other hand, does not keep statistics for LIKE predicates (instead
handling them via a magic constant), so the additional complexity does
not affect their time.

\subsection{Robustness Against  Regressions}
\label{subsec:regression}
When attempting to tune database instances, users often face
inexplicable performance regressions. Creating an index on a join
column is reasonably assumed by users to improve query performance,
but will frequently cause queries to run significantly slower. This
is primarily due to the query optimizer receiving cardinality
underestimates and optimistically making use of the index for a query
where a hash or merge join is faster. In figure
\ref{fig:regressions}, we show the frequency and severity of
regressions across all benchmarks when Postgres' internal estimates
are used vs SafeBounds cardinality bounds. While some regressions
still occur due to issues in cost modeling, SafeBound produces
half as many performance regressions across all benchmarks, $129$ to Postgres' $259$, and
they are half as severe on average, $1.7$x to Postgres' $3.3$x.

\subsection{Micro-Benchmarks} \label{subsec:micro:benchmarks}
In the following experiments, we evaluate how SafeBound's components perform individually as compared to alternatives. To do this, we calculate the error on self-join queries with equality predicates, specifically the \texttt{MovieCompanies} relation joining with itself on \texttt{MovieId} with and without an equality predicate on \texttt{ProductionYear}.

\noindent\textbf{Modeling the CDS rather than the DS reduces error by up to 20x.} By avoiding artificial inflation of the relations' cardinality caused by modeling the DS directly, SafeBound achieves significantly more accurate estimates. This can be seen on Figure \ref{fig:segmentation-microbenchmark} which shows the accuracy of various approximation methods for modeling the full CDS/DS of \texttt{MovieCompanies.MovieId} versus the compression ratio (\# distinct frequencies/ \# segments). The different colors correspond to different ways of choosing the segment boundaries for the approximation while the solid vs dashed lines correspond to whether the method models the CDS or the DS, respectively. As we would expect, every approximation method has lower error when applied to the CDS rather than the DS.

\noindent\textbf{The ValidApprox algorithm efficiently models the CDS.} Looking again at Figure \ref{fig:segmentation-microbenchmark}, we can compare the solid lines to get a sense for how different segmentation strategies affect the accuracy/compression trade-off. We compare against a couple of reasonable baselines, 1) an equi-depth strategy which segments the degree sequence into equal-cardinality segments 2) an exponential strategy which uses a geometric sequence for segment boundaries. Our two-pass algorithms outperform both baselines because it adjusts the size of buckets based on the skewness of the underlying degree sequence, taking into account both the importance of high frequency items and the long tail of the distribution.

\noindent\textbf{Complete-linkage clustering of CDSs provides low error at high compression ratios.} The experiment in Figure \ref{fig:clustering-microbenchmark} shows the effect of different clustering techniques. In this experiment, we joined the \texttt{MovieCompanies} relation with the \texttt{Title} relation according to their FK/PK relationship then calculate an MCV list for the \texttt{ProductionYear} attribute as described in Sec. \ref{subsec:choosing:statistics}. This results in $132$ CDS that we use to test various clustering methods, which cluster them into between $4$ and $64$ groups, and we represent each cluster with the maximum as in Sec. \ref{subsec:grouping:CDS:sets}. The error metric is then the average relative self-join error over all the original CDS when they are approximated using their cluster's maximum. In this case, compression ratio is defined as the number of original groups, $132$, divided by the number of clusters after compression.

We compare SafeBound's method, \textit{complete-linkage clustering}, with \textit{single-linkage clustering} and the naive method of grouping the functions into equal sized clusters  by cardinality. Across these methods, complete-linkage clustering results in lower error for all compression ratios. This is because naive clustering doesn't take into account the shape of the CDS and doesn't adaptively choose cluster sizes, and single-linkage clustering often produces long chaining clusters where one CDS dominates the maximum.

\begin{figure}[t]
    \centering
    \includegraphics[width=.42\textwidth]{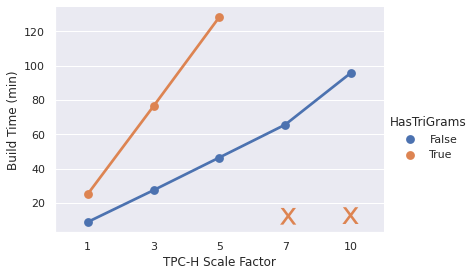}
    \caption{The pre-processing overhead increases linearly with the size of the dataset. [Python Prototype]}
    \label{fig:scalability}
\end{figure}

\subsection{Scalability}
\label{subsec:scalability}
To test how SafeBound scales to both more complex schemas and larger datasets, we experiment with the TPC-H benchmark at various scaling factors in Fig \ref{fig:scalability}. \footnote{We do not include this benchmark in our other experiments as its lack of skew, independence between columns, and independence across tables poorly represents real world distributions as discussed in \cite{DBLP:journals/vldb/LeisRGMBKN18}.} This benchmark has 14 join columns, 46 filter columns, and 9 PK-FK relationships over 8 tables, and we vary the scale factor from 1GB to 10GB.  We further build two versions of SafeBound: one where it constructs tri-gram statistics for LIKE predicates and one where it does not. This experiment shows that the construction time increases linearly in the size of the data. However, it also points to the inefficiency of the current python implementation in two ways. It runs out of memory when computing tri-grams for higher scale factors (as denoted by the X marks), and the build process is slower than expected given the simple underlying operations. This is in line with recent research which has shown an average of 29x worse performance when using CPython rather than C++ for a variety of applications due to dynamic type checking, interpreter overhead, and the global interpreter lock \cite{lion2022investigating}.

\section{Limitations and Future Work}
\label{sec:limitations}
This work represents SafeBound, a first practical system for computing
{\em guaranteed} cardinality upper bounds: previous upper bound
systems either require significant query time computation (PessEst),
or do not provide {\em guaranteed} bounds (Simplicity).  SafeBound
achieves up to 80\% lower end-to-end runtimes than PostgreSQL across
workloads, and is on par or better than state of the art ML-based
estimators and pessimistic cardinality estimators.  Yet, the current
SafeBound prototype is likely not yet ready for a production system,
due to a few limitations, which we discuss here.

\noindent\textbf{Build Time:} Any upper bound system must, at some point, read all rows in the database. This places a hard lower bound on the build time for cardinality bounding systems. Therefore, future work to reduce this expense will need to exploit parallelism or take advantage of times when the system is already scanning the data (e.g. during bulk loading or query execution).

\noindent\textbf{Handling Updates:} A challenge that we leave open is
handling updates without recomputing the degree sequences.  We note
here that a degree sequences is essentially a
\texttt{group-by/count/order-by} query and could benefit from IVM techniques~\cite{DBLP:journals/vldb/KochAKNNLS14}.


\noindent\textbf{Multiple Predicates Per Table:} SafeBound handles the existence of multiple predicates on a single relation by taking the minimum of their induced CDS'. However, this results in an estimated cardinality equal to the selectivity of the most selective predicate. This could be inaccurate when faced with multiple moderately selective predicates which are jointly highly selective. Future work should consider ways of keeping lightweight statistics on combinations of filter columns in order to tackle this problem.

\textbf{Correctness and Accuracy:} SafeBound does not rely on {\em any} statistical assumptions about the data for correctness: it always provides a correct upper bound, regardless of the underlying data distribution, even in the presence of predicates. However, Sec.~\ref{sec:safebound} describes heuristics which affect the space/accuracy tradeoff, and queries which do not conform to the assumptions may see poor performance (e.g. joins on tables with very different skews or highly selective predicates). Further, like nearly all non-sampling methods, SafeBound does not provide a precise characterization of its error. Improving these heurstics or formally characterizing the tightness of SafeBound is an exciting avenue for future work.

\section*{Acknowledgements}

This work was partially supported by NSF IIS 1907997, NSF-BSF 2109922, and a gift from Amazon through the UW Amazon Science Hub.

\newpage
\bibliographystyle{ACM-Reference-Format}
\bibliography{references}


\begin{thebibliography}{30}


\ifx \showCODEN    \undefined \def \showCODEN     #1{\unskip}     \fi
\ifx \showDOI      \undefined \def \showDOI       #1{#1}\fi
\ifx \showISBNx    \undefined \def \showISBNx     #1{\unskip}     \fi
\ifx \showISBNxiii \undefined \def \showISBNxiii  #1{\unskip}     \fi
\ifx \showISSN     \undefined \def \showISSN      #1{\unskip}     \fi
\ifx \showLCCN     \undefined \def \showLCCN      #1{\unskip}     \fi
\ifx \shownote     \undefined \def \shownote      #1{#1}          \fi
\ifx \showarticletitle \undefined \def \showarticletitle #1{#1}   \fi
\ifx \showURL      \undefined \def \showURL       {\relax}        \fi
\providecommand\bibfield[2]{#2}
\providecommand\bibinfo[2]{#2}
\providecommand\natexlab[1]{#1}
\providecommand\showeprint[2][]{arXiv:#2}

\bibitem[{Anonymized authors for double blind reviewing}(2022)]%
        {github}
\bibfield{author}{\bibinfo{person}{{Anonymized authors for double blind
  reviewing}}.} \bibinfo{year}{2022}\natexlab{}.
\newblock \bibinfo{title}{Online Appendix}.
\newblock
\newblock
\urldef\tempurl%
\url{https://github.com/AnonymousSigmod2023/SafeBound}
\showURL{%
\tempurl}


\bibitem[Cai et~al\mbox{.}(2019)]%
        {DBLP:conf/sigmod/CaiBS19}
\bibfield{author}{\bibinfo{person}{Walter Cai}, \bibinfo{person}{Magdalena
  Balazinska}, {and} \bibinfo{person}{Dan Suciu}.}
  \bibinfo{year}{2019}\natexlab{}.
\newblock \showarticletitle{Pessimistic Cardinality Estimation: Tighter Upper
  Bounds for Intermediate Join Cardinalities}. In
  \bibinfo{booktitle}{\emph{Proceedings of the 2019 International Conference on
  Management of Data, {SIGMOD} Conference 2019, Amsterdam, The Netherlands,
  June 30 - July 5, 2019}}, \bibfield{editor}{\bibinfo{person}{Peter~A. Boncz},
  \bibinfo{person}{Stefan Manegold}, \bibinfo{person}{Anastasia Ailamaki},
  \bibinfo{person}{Amol Deshpande}, {and} \bibinfo{person}{Tim Kraska}} (Eds.).
  \bibinfo{publisher}{{ACM}}, \bibinfo{pages}{18--35}.
\newblock
\urldef\tempurl%
\url{https://doi.org/10.1145/3299869.3319894}
\showDOI{\tempurl}


\bibitem[Chen et~al\mbox{.}(2022)]%
        {DBLP:journals/pvldb/ChenHWSS22}
\bibfield{author}{\bibinfo{person}{Jeremy Chen}, \bibinfo{person}{Yuqing
  Huang}, \bibinfo{person}{Mushi Wang}, \bibinfo{person}{Semih Salihoglu},
  {and} \bibinfo{person}{Kenneth Salem}.} \bibinfo{year}{2022}\natexlab{}.
\newblock \showarticletitle{Accurate Summary-based Cardinality Estimation
  Through the Lens of Cardinality Estimation Graphs}.
\newblock \bibinfo{journal}{\emph{Proc. {VLDB} Endow.}} \bibinfo{volume}{15},
  \bibinfo{number}{8} (\bibinfo{year}{2022}), \bibinfo{pages}{1533--1545}.
\newblock
\urldef\tempurl%
\url{https://www.vldb.org/pvldb/vol15/p1533-chen.pdf}
\showURL{%
\tempurl}


\bibitem[Deeds et~al\mbox{.}(2022)]%
        {DBLP:journals/corr/abs-2201-04166}
\bibfield{author}{\bibinfo{person}{Kyle Deeds}, \bibinfo{person}{Dan Suciu},
  \bibinfo{person}{Magda Balazinska}, {and} \bibinfo{person}{Walter Cai}.}
  \bibinfo{year}{2022}\natexlab{}.
\newblock \showarticletitle{Degree Sequence Bound For Join Cardinality
  Estimation}.
\newblock \bibinfo{journal}{\emph{CoRR}}  \bibinfo{volume}{abs/2201.04166}
  (\bibinfo{year}{2022}).
\newblock
\showeprint[arXiv]{2201.04166}
\urldef\tempurl%
\url{https://arxiv.org/abs/2201.04166}
\showURL{%
\tempurl}


\bibitem[Fagin(1983)]%
        {fagin1983degrees}
\bibfield{author}{\bibinfo{person}{Ronald Fagin}.}
  \bibinfo{year}{1983}\natexlab{}.
\newblock \showarticletitle{Degrees of acyclicity for hypergraphs and
  relational database schemes}.
\newblock \bibinfo{journal}{\emph{Journal of the ACM (JACM)}}
  \bibinfo{volume}{30}, \bibinfo{number}{3} (\bibinfo{year}{1983}),
  \bibinfo{pages}{514--550}.
\newblock


\bibitem[Ferragina and Vinciguerra(2020)]%
        {ferragina2020pgm}
\bibfield{author}{\bibinfo{person}{Paolo Ferragina} {and}
  \bibinfo{person}{Giorgio Vinciguerra}.} \bibinfo{year}{2020}\natexlab{}.
\newblock \showarticletitle{The PGM-index: a fully-dynamic compressed learned
  index with provable worst-case bounds}.
\newblock \bibinfo{journal}{\emph{Proceedings of the VLDB Endowment}}
  \bibinfo{volume}{13}, \bibinfo{number}{8} (\bibinfo{year}{2020}),
  \bibinfo{pages}{1162--1175}.
\newblock


\bibitem[Gottlob et~al\mbox{.}(2012)]%
        {DBLP:journals/jacm/GottlobLVV12}
\bibfield{author}{\bibinfo{person}{Georg Gottlob},
  \bibinfo{person}{Stephanie~Tien Lee}, \bibinfo{person}{Gregory Valiant},
  {and} \bibinfo{person}{Paul Valiant}.} \bibinfo{year}{2012}\natexlab{}.
\newblock \showarticletitle{Size and Treewidth Bounds for Conjunctive Queries}.
\newblock \bibinfo{journal}{\emph{J. {ACM}}} \bibinfo{volume}{59},
  \bibinfo{number}{3} (\bibinfo{year}{2012}), \bibinfo{pages}{16:1--16:35}.
\newblock
\urldef\tempurl%
\url{https://doi.org/10.1145/2220357.2220363}
\showDOI{\tempurl}


\bibitem[Grohe and Marx(2006)]%
        {DBLP:conf/soda/GroheM06}
\bibfield{author}{\bibinfo{person}{Martin Grohe} {and}
  \bibinfo{person}{D{\'{a}}niel Marx}.} \bibinfo{year}{2006}\natexlab{}.
\newblock \showarticletitle{Constraint solving via fractional edge covers}. In
  \bibinfo{booktitle}{\emph{Proceedings of the Seventeenth Annual {ACM-SIAM}
  Symposium on Discrete Algorithms, {SODA} 2006, Miami, Florida, USA, January
  22-26, 2006}}. \bibinfo{publisher}{{ACM} Press}, \bibinfo{pages}{289--298}.
\newblock
\urldef\tempurl%
\url{http://dl.acm.org/citation.cfm?id=1109557.1109590}
\showURL{%
\tempurl}


\bibitem[Han et~al\mbox{.}(2021)]%
        {DBLP:journals/pvldb/HanWWZYTZCQPQZL21}
\bibfield{author}{\bibinfo{person}{Yuxing Han}, \bibinfo{person}{Ziniu Wu},
  \bibinfo{person}{Peizhi Wu}, \bibinfo{person}{Rong Zhu},
  \bibinfo{person}{Jingyi Yang}, \bibinfo{person}{Liang~Wei Tan},
  \bibinfo{person}{Kai Zeng}, \bibinfo{person}{Gao Cong},
  \bibinfo{person}{Yanzhao Qin}, \bibinfo{person}{Andreas Pfadler},
  \bibinfo{person}{Zhengping Qian}, \bibinfo{person}{Jingren Zhou},
  \bibinfo{person}{Jiangneng Li}, {and} \bibinfo{person}{Bin Cui}.}
  \bibinfo{year}{2021}\natexlab{}.
\newblock \showarticletitle{Cardinality Estimation in {DBMS:} {A} Comprehensive
  Benchmark Evaluation}.
\newblock \bibinfo{journal}{\emph{Proc. {VLDB} Endow.}} \bibinfo{volume}{15},
  \bibinfo{number}{4} (\bibinfo{year}{2021}), \bibinfo{pages}{752--765}.
\newblock
\urldef\tempurl%
\url{https://doi.org/10.14778/3503585.3503586}
\showDOI{\tempurl}


\bibitem[Hertzschuch et~al\mbox{.}(2021)]%
        {DBLP:conf/cidr/HertzschuchHHL21}
\bibfield{author}{\bibinfo{person}{Axel Hertzschuch}, \bibinfo{person}{Claudio
  Hartmann}, \bibinfo{person}{Dirk Habich}, {and} \bibinfo{person}{Wolfgang
  Lehner}.} \bibinfo{year}{2021}\natexlab{}.
\newblock \showarticletitle{Simplicity Done Right for Join Ordering}. In
  \bibinfo{booktitle}{\emph{11th Conference on Innovative Data Systems
  Research, {CIDR} 2021, Virtual Event, January 11-15, 2021, Online
  Proceedings}}. \bibinfo{publisher}{www.cidrdb.org}.
\newblock
\urldef\tempurl%
\url{http://cidrdb.org/cidr2021/papers/cidr2021\_paper01.pdf}
\showURL{%
\tempurl}


\bibitem[Hilprecht et~al\mbox{.}(2020)]%
        {DBLP:journals/pvldb/HilprechtSKMKB20}
\bibfield{author}{\bibinfo{person}{Benjamin Hilprecht},
  \bibinfo{person}{Andreas Schmidt}, \bibinfo{person}{Moritz Kulessa},
  \bibinfo{person}{Alejandro Molina}, \bibinfo{person}{Kristian Kersting},
  {and} \bibinfo{person}{Carsten Binnig}.} \bibinfo{year}{2020}\natexlab{}.
\newblock \showarticletitle{DeepDB: Learn from Data, not from Queries!}
\newblock \bibinfo{journal}{\emph{Proc. {VLDB} Endow.}} \bibinfo{volume}{13},
  \bibinfo{number}{7} (\bibinfo{year}{2020}), \bibinfo{pages}{992--1005}.
\newblock
\urldef\tempurl%
\url{https://doi.org/10.14778/3384345.3384349}
\showDOI{\tempurl}


\bibitem[Khamis et~al\mbox{.}(2016)]%
        {DBLP:conf/pods/KhamisNS16}
\bibfield{author}{\bibinfo{person}{Mahmoud~Abo Khamis},
  \bibinfo{person}{Hung~Q. Ngo}, {and} \bibinfo{person}{Dan Suciu}.}
  \bibinfo{year}{2016}\natexlab{}.
\newblock \showarticletitle{Computing Join Queries with Functional
  Dependencies}. In \bibinfo{booktitle}{\emph{Proceedings of the 35th {ACM}
  {SIGMOD-SIGACT-SIGAI} Symposium on Principles of Database Systems, {PODS}
  2016, San Francisco, CA, USA, June 26 - July 01, 2016}},
  \bibfield{editor}{\bibinfo{person}{Tova Milo} {and}
  \bibinfo{person}{Wang{-}Chiew Tan}} (Eds.). \bibinfo{publisher}{{ACM}},
  \bibinfo{pages}{327--342}.
\newblock
\urldef\tempurl%
\url{https://doi.org/10.1145/2902251.2902289}
\showDOI{\tempurl}


\bibitem[Khamis et~al\mbox{.}(2017)]%
        {DBLP:conf/pods/Khamis0S17}
\bibfield{author}{\bibinfo{person}{Mahmoud~Abo Khamis},
  \bibinfo{person}{Hung~Q. Ngo}, {and} \bibinfo{person}{Dan Suciu}.}
  \bibinfo{year}{2017}\natexlab{}.
\newblock \showarticletitle{What Do Shannon-type Inequalities, Submodular
  Width, and Disjunctive Datalog Have to Do with One Another?}. In
  \bibinfo{booktitle}{\emph{Proceedings of the 36th {ACM} {SIGMOD-SIGACT-SIGAI}
  Symposium on Principles of Database Systems, {PODS} 2017, Chicago, IL, USA,
  May 14-19, 2017}}, \bibfield{editor}{\bibinfo{person}{Emanuel Sallinger},
  \bibinfo{person}{Jan~Van den Bussche}, {and} \bibinfo{person}{Floris Geerts}}
  (Eds.). \bibinfo{publisher}{{ACM}}, \bibinfo{pages}{429--444}.
\newblock
\urldef\tempurl%
\url{https://doi.org/10.1145/3034786.3056105}
\showDOI{\tempurl}


\bibitem[Kim et~al\mbox{.}(2022)]%
        {DBLP:conf/sigmod/KimJSHCC22}
\bibfield{author}{\bibinfo{person}{Kyoungmin Kim}, \bibinfo{person}{Jisung
  Jung}, \bibinfo{person}{In Seo}, \bibinfo{person}{Wook{-}Shin Han},
  \bibinfo{person}{Kangwoo Choi}, {and} \bibinfo{person}{Jaehyok Chong}.}
  \bibinfo{year}{2022}\natexlab{}.
\newblock \showarticletitle{Learned Cardinality Estimation: An In-depth Study}.
  In \bibinfo{booktitle}{\emph{{SIGMOD} '22: International Conference on
  Management of Data, Philadelphia, PA, USA, June 12 - 17, 2022}},
  \bibfield{editor}{\bibinfo{person}{Zachary Ives}, \bibinfo{person}{Angela
  Bonifati}, {and} \bibinfo{person}{Amr~El Abbadi}} (Eds.).
  \bibinfo{publisher}{{ACM}}, \bibinfo{pages}{1214--1227}.
\newblock
\urldef\tempurl%
\url{https://doi.org/10.1145/3514221.3526154}
\showDOI{\tempurl}


\bibitem[Kipf et~al\mbox{.}(2019)]%
        {DBLP:conf/cidr/KipfKRLBK19}
\bibfield{author}{\bibinfo{person}{Andreas Kipf}, \bibinfo{person}{Thomas
  Kipf}, \bibinfo{person}{Bernhard Radke}, \bibinfo{person}{Viktor Leis},
  \bibinfo{person}{Peter~A. Boncz}, {and} \bibinfo{person}{Alfons Kemper}.}
  \bibinfo{year}{2019}\natexlab{}.
\newblock \showarticletitle{Learned Cardinalities: Estimating Correlated Joins
  with Deep Learning}. In \bibinfo{booktitle}{\emph{9th Biennial Conference on
  Innovative Data Systems Research, {CIDR} 2019, Asilomar, CA, USA, January
  13-16, 2019, Online Proceedings}}. \bibinfo{publisher}{www.cidrdb.org}.
\newblock
\urldef\tempurl%
\url{http://cidrdb.org/cidr2019/papers/p101-kipf-cidr19.pdf}
\showURL{%
\tempurl}


\bibitem[Koch et~al\mbox{.}(2014)]%
        {DBLP:journals/vldb/KochAKNNLS14}
\bibfield{author}{\bibinfo{person}{Christoph Koch}, \bibinfo{person}{Yanif
  Ahmad}, \bibinfo{person}{Oliver Kennedy}, \bibinfo{person}{Milos Nikolic},
  \bibinfo{person}{Andres N{\"{o}}tzli}, \bibinfo{person}{Daniel Lupei}, {and}
  \bibinfo{person}{Amir Shaikhha}.} \bibinfo{year}{2014}\natexlab{}.
\newblock \showarticletitle{DBToaster: higher-order delta processing for
  dynamic, frequently fresh views}.
\newblock \bibinfo{journal}{\emph{{VLDB} J.}} \bibinfo{volume}{23},
  \bibinfo{number}{2} (\bibinfo{year}{2014}), \bibinfo{pages}{253--278}.
\newblock
\urldef\tempurl%
\url{https://doi.org/10.1007/s00778-013-0348-4}
\showDOI{\tempurl}


\bibitem[Leis et~al\mbox{.}(2018)]%
        {DBLP:journals/vldb/LeisRGMBKN18}
\bibfield{author}{\bibinfo{person}{Viktor Leis}, \bibinfo{person}{Bernhard
  Radke}, \bibinfo{person}{Andrey Gubichev}, \bibinfo{person}{Atanas Mirchev},
  \bibinfo{person}{Peter~A. Boncz}, \bibinfo{person}{Alfons Kemper}, {and}
  \bibinfo{person}{Thomas Neumann}.} \bibinfo{year}{2018}\natexlab{}.
\newblock \showarticletitle{Query optimization through the looking glass, and
  what we found running the Join Order Benchmark}.
\newblock \bibinfo{journal}{\emph{{VLDB} J.}} \bibinfo{volume}{27},
  \bibinfo{number}{5} (\bibinfo{year}{2018}), \bibinfo{pages}{643--668}.
\newblock
\urldef\tempurl%
\url{https://doi.org/10.1007/s00778-017-0480-7}
\showDOI{\tempurl}


\bibitem[Lion et~al\mbox{.}(2022)]%
        {lion2022investigating}
\bibfield{author}{\bibinfo{person}{David Lion}, \bibinfo{person}{Adrian Chiu},
  \bibinfo{person}{Michael Stumm}, {and} \bibinfo{person}{Ding Yuan}.}
  \bibinfo{year}{2022}\natexlab{}.
\newblock \showarticletitle{Investigating Managed Language Runtime Performance:
  Why $\{$JavaScript$\}$ and Python are 8x and 29x slower than C++, yet Java
  and Go can be Faster?}. In \bibinfo{booktitle}{\emph{2022 USENIX Annual
  Technical Conference (USENIX ATC 22)}}. \bibinfo{pages}{835--852}.
\newblock


\bibitem[Liu et~al\mbox{.}(2021)]%
        {DBLP:journals/pvldb/LiuD0Z21}
\bibfield{author}{\bibinfo{person}{Jie Liu}, \bibinfo{person}{Wenqian Dong},
  \bibinfo{person}{Dong Li}, {and} \bibinfo{person}{Qingqing Zhou}.}
  \bibinfo{year}{2021}\natexlab{}.
\newblock \showarticletitle{Fauce: Fast and Accurate Deep Ensembles with
  Uncertainty for Cardinality Estimation}.
\newblock \bibinfo{journal}{\emph{Proc. {VLDB} Endow.}} \bibinfo{volume}{14},
  \bibinfo{number}{11} (\bibinfo{year}{2021}), \bibinfo{pages}{1950--1963}.
\newblock
\urldef\tempurl%
\url{https://doi.org/10.14778/3476249.3476254}
\showDOI{\tempurl}


\bibitem[M{\"u}llner(2013)]%
        {mullner2013fastcluster}
\bibfield{author}{\bibinfo{person}{Daniel M{\"u}llner}.}
  \bibinfo{year}{2013}\natexlab{}.
\newblock \showarticletitle{fastcluster: Fast hierarchical, agglomerative
  clustering routines for R and Python}.
\newblock \bibinfo{journal}{\emph{Journal of Statistical Software}}
  \bibinfo{volume}{53} (\bibinfo{year}{2013}), \bibinfo{pages}{1--18}.
\newblock


\bibitem[Negi et~al\mbox{.}(2021)]%
        {DBLP:journals/pvldb/NegiMKMTKA21}
\bibfield{author}{\bibinfo{person}{Parimarjan Negi}, \bibinfo{person}{Ryan~C.
  Marcus}, \bibinfo{person}{Andreas Kipf}, \bibinfo{person}{Hongzi Mao},
  \bibinfo{person}{Nesime Tatbul}, \bibinfo{person}{Tim Kraska}, {and}
  \bibinfo{person}{Mohammad Alizadeh}.} \bibinfo{year}{2021}\natexlab{}.
\newblock \showarticletitle{Flow-Loss: Learning Cardinality Estimates That
  Matter}.
\newblock \bibinfo{journal}{\emph{Proc. {VLDB} Endow.}} \bibinfo{volume}{14},
  \bibinfo{number}{11} (\bibinfo{year}{2021}), \bibinfo{pages}{2019--2032}.
\newblock
\urldef\tempurl%
\url{http://www.vldb.org/pvldb/vol14/p2019-negi.pdf}
\showURL{%
\tempurl}


\bibitem[Wang et~al\mbox{.}(2021a)]%
        {DBLP:journals/pvldb/WangCLL21}
\bibfield{author}{\bibinfo{person}{Jiayi Wang}, \bibinfo{person}{Chengliang
  Chai}, \bibinfo{person}{Jiabin Liu}, {and} \bibinfo{person}{Guoliang Li}.}
  \bibinfo{year}{2021}\natexlab{a}.
\newblock \showarticletitle{{FACE:} {A} Normalizing Flow based Cardinality
  Estimator}.
\newblock \bibinfo{journal}{\emph{Proc. {VLDB} Endow.}} \bibinfo{volume}{15},
  \bibinfo{number}{1} (\bibinfo{year}{2021}), \bibinfo{pages}{72--84}.
\newblock
\urldef\tempurl%
\url{https://doi.org/10.14778/3485450.3485458}
\showDOI{\tempurl}


\bibitem[Wang et~al\mbox{.}(2021b)]%
        {DBLP:journals/pvldb/WangQWWZ21}
\bibfield{author}{\bibinfo{person}{Xiaoying Wang}, \bibinfo{person}{Changbo
  Qu}, \bibinfo{person}{Weiyuan Wu}, \bibinfo{person}{Jiannan Wang}, {and}
  \bibinfo{person}{Qingqing Zhou}.} \bibinfo{year}{2021}\natexlab{b}.
\newblock \showarticletitle{Are We Ready For Learned Cardinality Estimation?}
\newblock \bibinfo{journal}{\emph{Proc. {VLDB} Endow.}} \bibinfo{volume}{14},
  \bibinfo{number}{9} (\bibinfo{year}{2021}), \bibinfo{pages}{1640--1654}.
\newblock
\urldef\tempurl%
\url{https://doi.org/10.14778/3461535.3461552}
\showDOI{\tempurl}


\bibitem[Woltmann et~al\mbox{.}(2020)]%
        {DBLP:conf/sigmod/WoltmannHHL20}
\bibfield{author}{\bibinfo{person}{Lucas Woltmann}, \bibinfo{person}{Claudio
  Hartmann}, \bibinfo{person}{Dirk Habich}, {and} \bibinfo{person}{Wolfgang
  Lehner}.} \bibinfo{year}{2020}\natexlab{}.
\newblock \showarticletitle{Best of both worlds: combining traditional and
  machine learning models for cardinality estimation}. In
  \bibinfo{booktitle}{\emph{Proceedings of the Third International Workshop on
  Exploiting Artificial Intelligence Techniques for Data Management,
  aiDM@SIGMOD 2020, Portland, Oregon, USA, June 19, 2020}},
  \bibfield{editor}{\bibinfo{person}{Rajesh Bordawekar}, \bibinfo{person}{Oded
  Shmueli}, \bibinfo{person}{Nesime Tatbul}, {and} \bibinfo{person}{Tin~Kam
  Ho}} (Eds.). \bibinfo{publisher}{{ACM}}, \bibinfo{pages}{4:1--4:8}.
\newblock
\urldef\tempurl%
\url{https://doi.org/10.1145/3401071.3401658}
\showDOI{\tempurl}


\bibitem[Woltmann et~al\mbox{.}(2019)]%
        {DBLP:conf/sigmod/WoltmannHTHL19}
\bibfield{author}{\bibinfo{person}{Lucas Woltmann}, \bibinfo{person}{Claudio
  Hartmann}, \bibinfo{person}{Maik Thiele}, \bibinfo{person}{Dirk Habich},
  {and} \bibinfo{person}{Wolfgang Lehner}.} \bibinfo{year}{2019}\natexlab{}.
\newblock \showarticletitle{Cardinality estimation with local deep learning
  models}. In \bibinfo{booktitle}{\emph{Proceedings of the Second International
  Workshop on Exploiting Artificial Intelligence Techniques for Data
  Management, aiDM@SIGMOD 2019, Amsterdam, The Netherlands, July 5, 2019}},
  \bibfield{editor}{\bibinfo{person}{Rajesh Bordawekar} {and}
  \bibinfo{person}{Oded Shmueli}} (Eds.). \bibinfo{publisher}{{ACM}},
  \bibinfo{pages}{5:1--5:8}.
\newblock
\urldef\tempurl%
\url{https://doi.org/10.1145/3329859.3329875}
\showDOI{\tempurl}


\bibitem[Wu and Cong(2021)]%
        {DBLP:conf/sigmod/WuC21}
\bibfield{author}{\bibinfo{person}{Peizhi Wu} {and} \bibinfo{person}{Gao
  Cong}.} \bibinfo{year}{2021}\natexlab{}.
\newblock \showarticletitle{A Unified Deep Model of Learning from both Data and
  Queries for Cardinality Estimation}. In \bibinfo{booktitle}{\emph{{SIGMOD}
  '21: International Conference on Management of Data, Virtual Event, China,
  June 20-25, 2021}}, \bibfield{editor}{\bibinfo{person}{Guoliang Li},
  \bibinfo{person}{Zhanhuai Li}, \bibinfo{person}{Stratos Idreos}, {and}
  \bibinfo{person}{Divesh Srivastava}} (Eds.). \bibinfo{publisher}{{ACM}},
  \bibinfo{pages}{2009--2022}.
\newblock
\urldef\tempurl%
\url{https://doi.org/10.1145/3448016.3452830}
\showDOI{\tempurl}


\bibitem[Wu et~al\mbox{.}(2020)]%
        {wu2020bayescard}
\bibfield{author}{\bibinfo{person}{Ziniu Wu}, \bibinfo{person}{Amir Shaikhha},
  \bibinfo{person}{Rong Zhu}, \bibinfo{person}{Kai Zeng},
  \bibinfo{person}{Yuxing Han}, {and} \bibinfo{person}{Jingren Zhou}.}
  \bibinfo{year}{2020}\natexlab{}.
\newblock \showarticletitle{BayesCard: Revitilizing Bayesian Frameworks for
  Cardinality Estimation}.
\newblock \bibinfo{journal}{\emph{arXiv preprint arXiv:2012.14743}}
  (\bibinfo{year}{2020}).
\newblock


\bibitem[Yang et~al\mbox{.}(2020a)]%
        {DBLP:journals/pvldb/YangKLLDCS20}
\bibfield{author}{\bibinfo{person}{Zongheng Yang}, \bibinfo{person}{Amog
  Kamsetty}, \bibinfo{person}{Sifei Luan}, \bibinfo{person}{Eric Liang},
  \bibinfo{person}{Yan Duan}, \bibinfo{person}{Xi Chen}, {and}
  \bibinfo{person}{Ion Stoica}.} \bibinfo{year}{2020}\natexlab{a}.
\newblock \showarticletitle{NeuroCard: One Cardinality Estimator for All
  Tables}.
\newblock \bibinfo{journal}{\emph{Proc. {VLDB} Endow.}} \bibinfo{volume}{14},
  \bibinfo{number}{1} (\bibinfo{year}{2020}), \bibinfo{pages}{61--73}.
\newblock
\urldef\tempurl%
\url{https://doi.org/10.14778/3421424.3421432}
\showDOI{\tempurl}


\bibitem[Yang et~al\mbox{.}(2020b)]%
        {yang2020neurocard}
\bibfield{author}{\bibinfo{person}{Zongheng Yang}, \bibinfo{person}{Amog
  Kamsetty}, \bibinfo{person}{Sifei Luan}, \bibinfo{person}{Eric Liang},
  \bibinfo{person}{Yan Duan}, \bibinfo{person}{Xi Chen}, {and}
  \bibinfo{person}{Ion Stoica}.} \bibinfo{year}{2020}\natexlab{b}.
\newblock \showarticletitle{NeuroCard: one cardinality estimator for all
  tables}.
\newblock \bibinfo{journal}{\emph{arXiv preprint arXiv:2006.08109}}
  (\bibinfo{year}{2020}).
\newblock


\bibitem[Zhu et~al\mbox{.}(2021)]%
        {DBLP:journals/pvldb/ZhuWHZPQZC21}
\bibfield{author}{\bibinfo{person}{Rong Zhu}, \bibinfo{person}{Ziniu Wu},
  \bibinfo{person}{Yuxing Han}, \bibinfo{person}{Kai Zeng},
  \bibinfo{person}{Andreas Pfadler}, \bibinfo{person}{Zhengping Qian},
  \bibinfo{person}{Jingren Zhou}, {and} \bibinfo{person}{Bin Cui}.}
  \bibinfo{year}{2021}\natexlab{}.
\newblock \showarticletitle{{FLAT:} Fast, Lightweight and Accurate Method for
  Cardinality Estimation}.
\newblock \bibinfo{journal}{\emph{Proc. {VLDB} Endow.}} \bibinfo{volume}{14},
  \bibinfo{number}{9} (\bibinfo{year}{2021}), \bibinfo{pages}{1489--1502}.
\newblock
\urldef\tempurl%
\url{https://doi.org/10.14778/3461535.3461539}
\showDOI{\tempurl}


\end{thebibliography}

\newpage

\end{document}